\documentclass[12pt,a4paper]{article}

\topmargin 0pt \advance \topmargin by -\headheight \advance
\topmargin by -\headsep \textheight 8.9in \oddsidemargin 0pt
\evensidemargin \oddsidemargin \marginparwidth 0.5in \textwidth
6.5in
% A4:
%\advance\hoffset by -3mm \advance\voffset by  8mm

\usepackage{graphicx}
\usepackage{amsmath,amssymb,amsfonts,amsthm}

\def\be{\begin{equation}}
\def\ee{\end{equation}}
\def\bea{\begin{eqnarray}}
\def\eea{\end{eqnarray}}
\def\bma{\begin{mathletters}}
\def\ema{\end{mathletters}}

\def\0{\overline{0}}

\def\q0{\underline{0}}

\def\H{{\cal H}}

\newcommand{\C}{{\mathbb{C}}}
\def\deg{\mbox{deg}}

\def\W{{\cal W}}

\def\R{{\mathbb R}}
\def\N{{\mathbb N}}
\def\C{{\mathbb C}}

\def\tr{\mbox{tr}}
\def\one{\leavevmode\hbox{\small1\normalsize\kern-.33em1}}

\def\bra#1{\langle#1|} \def\ket#1{|#1\rangle}
\def\braket#1#2{\langle#1|#2\rangle}

\def\proj#1{\ket{#1}\!\bra{#1}}

\newcommand\openone{\leavevmode\hbox{\small1\kern-3.8pt\normalsize1}}

\newtheorem{theo}{Theorem}

\newtheorem{lemma}[theo]{Lemma}
\newtheorem{prop}[theo]{Proposition}

\def\id{{\mathbb I}}
\def\tr{\mbox{tr}}

\begin{document}

\title{A paradox in bosonic energy computations via semidefinite programming relaxations}
\author{M. Navascu\'es$^1$, A. Garc\'ia-S\'aez$^2$, A. Ac\'in$^3$,  S. Pironio$^4$, and M. B. Plenio$^5$\\
\small $^1$School of Physics, University of Bristol, BS8 1TL, Bristol, U.K.\\
\small $^2$Dept. d'Estructura i Constituents de la Mat\`{e}ria, Universitat de Barcelona, 08028 Barcelona, Spain\\
\small $^3$ICFO-Institut de Ci\`{e}ncies Fot\`{o}niques, E-08860 Castelldefels (Barcelona), Spain\\
\small $^4$Laboratoire d'Information Quantique, Universit\'e Libre de Bruxelles, 1050 Bruxelles, Belgium\\
\small $^5$Universit\"{a}t Ulm, Institut f\"{u}r Theoretische Physik, 89069 Ulm, Germany.}

\date{}
\maketitle

\begin{abstract}
We show that the recent hierarchy of semidefinite programming relaxations based on non-commutative polynomial optimization and reduced density matrix variational methods exhibits an interesting paradox when applied to the bosonic case: even though it can be rigorously proven that the hierarchy collapses after the first step, numerical implementations of higher order steps generate a sequence of improving lower bounds that converges to the optimal solution. We analyze this effect and compare it with similar behavior observed in implementations of semidefinite programming relaxations for commutative polynomial minimization. We conclude that the method converges due to the rounding errors occurring during the execution of the numerical program, and show that convergence is lost as soon as computer precision is incremented. We support this conclusion by proving that for any element $p$ of a Weyl algebra which is non-negative in the Schr\"{o}dinger representation there exists another element $\tilde{p}$ arbitrarily close to $p$ that admits a sum of squares decomposition. 
\end{abstract}

\section{Introduction}
Computing the energy spectrum of a finite set of indistinguishable particles subject to a given potential is a standard problem appearing in many branches of physics, e.g., in quantum chemistry, atomic physics, or condensed matter physics.
Although traditionally the main approaches to this problem have been variational \cite{szabo}, in the last decade it has been attacked with success by means of semidefinite programming (SDP) formulations \cite{nakata, mazziotti, nakata2} of the constraints on second-order reduced density matrices proposed in \cite{coleman,garrod,erdahl} (the so-called 2-RDM method).
These SDP methods can be viewed as particular instances of a more general \emph{non-commutative polynomial optimization} approach \cite{POI,chapter}, which extends to the non-commutative setting the method developed by Lasserre \cite{moment} and Parrilo \cite{parrilo} for scalar polynomial optimization. Roughly speaking, a non-commutative optimization problem consists in finding the minimal eigenvalue of a hermitian polynomial of non-commutative operators. To solve such problems, one can define a hierarchy of SDP relaxations, each of which corresponds to finding a sum of squares decomposition of the polynomial to be minimized which provides a lower bound $p^{k}$ on the optimal solution $p^\star$ of the original problem, with $p^1\leq p^2\leq...\leq p^\star$.
 This approach reduces to the 2-RDM method when applied to fermionic systems, but more generally is also highly successful, e.g., to characterize the set of quantum correlations in quantum information science \cite{quantum}. Recently, modifications of this algorithm exploiting translational invariance have been proposed independently by H\"{u}bener and Barthel \cite{hubener} and Baumgratz and Plenio \cite{tillmann} as an alternative to variational techniques in condensed matter physics.
In \cite{maz_bos,hubener,tillmann}, it was suggested to apply such SDP methods to compute the ground-state energy of bosonic systems, i.e., to find the minimal eigenvalue of Weyl polynomials. 

In this article, we point out that any computer implementation of the SDP hierarchies \cite{POI,chapter} to bosonic systems will exhibit the non-commutative analog of an effect already observed in similar algorithms for commutative polynomial minimization \cite{henrion}, \cite{waki}. On one hand, it can be proven that any relaxation beyond the first one will not provide better lower bounds on $p^\star$, i.e., $p^k=p^1$ for all $k\geq 1$. On the other hand, numerically it is observed that the bounds $\hat{p}^1, \hat{p}^2,...$ output by the computer form an increasing sequence, with $\lim_{k\to\infty} \hat{p}^k=p^\star$.

We will show that the resolution of this ``mathematical paradox'' follows the same lines as the commutative one. Namely, even though there exist positive Weyl polynomials $p$ that do not admit a sum of squares decomposition, for any such polynomial there exists an arbitrarily small perturbation $p\to\tilde{p}= p+\epsilon g$ such that $\tilde{p}$ can be decomposed as a sum of squares of Weyl polynomials. The rounding errors introduced by the computer while executing the algorithm correspond to such a perturbation, and so numerical implementations of the SDP method converges to the correct answer of the problem. This is, therefore, an example of a numerical method that converges, not in spite of rounding errors, but \emph{because} of them.

This article is structured as follows. In Section \ref{weyl}, we provide the basic definitions and facts about Weyl algebras that are used in the remaining of the text. In Section \ref{method} we will describe the SDP method and illustrate the paradox with a numerical example. The resolution of the paradox will be given in Section \ref{positivstellensatz}, where we will prove that any positive Weyl polynomial can be perturbed to a sum of squares of polynomials.
%Based on our findings, we will propose in Section \ref{moment_const} a new algorithm to minimize bosonic hamiltonians.
Finally, in Section \ref{conclusion} we will present our conclusions.

\section{Definitions and basic results on Weyl algebras}
\label{weyl}
\subsection{Weyl algebras and Weyl polynomials}
A \emph{Weyl algebra} ${\cal W}_n$ is a $*$-algebra with $2n$ generators $\bar a=(a_1,a_2,...,a_n)$ and $\bar{a}^*=(a^*_1,a^*_2,...,a^*_n)$ satisfying the canonical commutation relations (CCRs):
\be
[a_i,a_j]=[a_i^*,a_j^*]=0,\,[a_i,a^*_j]=\delta_{ij}, \mbox{ for all } i,j\in\{1,...,n\}.
\ee
An element $p$ of ${\cal W}_n$ is thus a linear combination (with complex coefficients) of words in the $2n$ letters $\bar a$ and ${\bar a}^*$. The words and elements of ${\cal W}_n$ can also be viewed as monomials and polynomials, respectively, in the $2n$ variables $\bar a$ and ${\bar a}^*$.

Using the CCRs, any element $p$ of a Weyl algebra can be brought to the \emph{normal form}
\be
p=\sum_{\bar{s},\bar{t}}p_{\bar{s},\bar{t}}\,({a^*})^{\bar{s}}a^{\bar{t}},
\label{normal}
\ee
where $\bar s =(s_1,\ldots,s_n),\bar t =(t_1,\ldots,t_n)\in \N^n$, $a^{\bar t}$ denotes the monomial $a^{\bar t}=\prod_{i=1}^na_{i}^{t_i}$ and similarly $(a^*)^{\bar s}=\prod_{i=1}^n{a^*_{i}}^{s_i}$,  and where $p_{\bar{s},\bar{t}}\in \C$ are complex coefficients. For instance, the elements $a_1a_1^*$ and $a_1a_2a_2^*a_1^*$, expressed in normal form, are $1+a_1^*a_1$ and $1+a_1^*a_1+a_2a_2^*+a^*_1a^*_2a_1a_2$, respectively. We will later show that the decomposition (\ref{normal}) is unique for each $p\in {\cal W}_n$. To check whether two different polynomials $p,p'$ in the variables $\bar{a},\bar{a}^*$  represent the same element $p=p'\in {\cal W}_n$ it is therefore enough to write $p,p'$ in normal form. 

This last observation suggests a natural norm in $\W_n$: let $p\in\W_n$, and let (\ref{normal}) be its normal decomposition. Then, we define the $l_1$-norm of $p$ as 
\be
l_1(p)\equiv \sum_{\bar{s},\bar{t}}|p_{\bar{s},\bar{t}}|.
\ee
It is also useful to distinguish the elements of ${\cal W}_n$ by resorting to the concept of degree. We say that the degree of an element $p$ of ${\cal W}_n$ is $\deg(p)=\max\{\|\bar{s}\|_1+\|\bar{t}\|_1:p_{\bar{s}\bar{t}}\not=0\}$ for $p$ expressed in normal form (\ref{normal}). It is easy to see that the degree of any monomial $s=\prod_{k=1}^d s_k$ with $s_k\in\{a_j,a_j^*\}_{j=1}^n$ is equal to $\deg(s)\equiv|s|=d$.

Given a monomial $s$, we denote by $\ddagger s\ddagger$ its \emph{anti-normal ordering}, that is, the monomial $s'$ that results when we reorder the letters appearing in the expression of $s$ in such a way that all the letters $a_1,a_2,...,a_n$ end up on the left and all the letters ($a^*_1,a^*_2,...,a^*_n$) end up on the right. For example, $\ddagger a_2^*(a^*_1)^2a_1\ddagger=\ddagger a^*_1a_1a_1^*a_2^*\ddagger=a_1(a^*_1)^2a_2^*$.

In this article, we will be mainly concerned with \emph{hermitian} elements of $\W_n$, i.e., those polynomials $p$ such that $p=p^*$. If $p$ is decomposed as in (\ref{normal}), the hermiticity condition thus translates as $p_{\bar{s},\bar{t}}=p_{\bar{t},\bar{s}}^*$, where $p_{\bar{t},\bar{s}}^*$ denotes the complex conjugate of the number $p_{\bar{t},\bar{s}}$.

\subsection{The Schr\"odinger representation}
Let $\pi$ be a mapping $\pi:{\cal W}_n\to L(\H)$, for some Hilbert space $\H$, where $L(\H)$ denotes the space of linear (not necessarily bounded) operators of $\H$. We say that $\pi$ is a $*$-representation of ${\cal W}_n$ if and only if
\begin{enumerate}
\item
$\pi(1)=\id_\H$.
\item
$\pi(pq)=\pi(p)\pi(q)$, $\pi(p+q)=\pi(p)+\pi(q)$ for all $p,q\in \W_n$.
\item
$\pi(p)^*=\pi(p^*)$, for all $p\in \W_n$.
\end{enumerate}
We now show that ${\cal W}_n$ admits a representation. For this, let $H$ be a separable Hilbert space, and let $\{\ket{s},s\in\N\}$ be an orthonormal basis for $H$, which we call the \emph{number basis}. If we denote by $\tilde{a}\in L(H)$ the linear operator defined by 
\be 
\tilde{a}\ket{s}=\left\{
\begin{array}{ll}0& \mbox{ for } s=0,\nonumber\\
\sqrt{s}\ket{s-1} & \mbox{ otherwise.}
\end{array}\right.
\ee
then its adjoint $\tilde{a}^*$ satisfies
\be
\tilde{a}^*\ket{s}=\sqrt{s+1}\ket{s+1},
\ee
and so it can be verified that 
\be
[\tilde{a},\tilde{a}^*]\ket{s}=\ket{s},\mbox{ for all }s\in \N.
\ee
Defining $\H=H^{\otimes n}$, we can then build a representation $\pi_S:{\cal W}_n\to L(\H)$ for the Weyl algebra ${\cal W}_n$ through
\be
\pi_S(a_k)=\id^{\otimes k-1}\otimes\tilde{a}\otimes\id^{\otimes n-k}.
\ee
This representation of $\W_n$ is known as the \emph{Schr\"{o}dinger (or Fock) representation}. From now on, we always refer to this representation and write $\pi$ for $\pi_S$ for simplicity.

\subsection{Weyl polynomial minimization}
The Schr\"odinger representation admits a clear physical interpretation: given a set of $n$ one-dimensional particles, it associates to each particle $k\in\{1,...,n\}$ a pair of \emph{creation and annihilation operators} $\pi(a_k), \pi(a^*_k)$. The operators describing the position and momentum of particle $k$ along the real line are then given, respectively, by $\pi(x_k)\equiv\pi(\frac{a_k+a^*_k}{\sqrt{2}})$ and $\pi(p_k)\equiv\pi(\frac{a_k-a^*_k}{\sqrt{2}i})$.
If these particles are subject to a potential $V(x_1,...,x_n)$, the energy operator of the system, in non-relativistic approximation, will be given by
\be
\pi(E)=\pi\left(\sum_{i=1}^n\frac{p_i^2}{2m_i}+V(x_1,...,x_n)\right),
\ee
where $m_i\in\R^+$ is the mass of particle $i$. In particular, the minimum energy of the system will be given by
\be
\lambda_{\inf}(E)=\inf\,\{\bra{\phi}\pi(E)\ket{\phi}:\ket{\phi}\in \mathcal{S},\braket{\phi}{\phi}=1\},
\ee
where $\mathcal{S}$ is the Schwartz space, that is, the set of states $|\phi\rangle$ (i.e., vectors of ${\cal H}$) satisfying $\|\pi(p) \phi\rangle\|<\infty$ for all $p\in \mathcal{W}_n$. Note that the minimization over $\bra{\phi}E\ket{\phi}$ makes sense, because the energy is an hermitian operator, $E=E^*$, and  consequently, $\bra{\phi}E\ket{\phi}\in\R$ is a real quantity for all $|\phi\rangle\in \mathcal{S}$. 

The case where $E$ is a polynomial in the variable $x_i,p_j$ -- or equivalently in the variables $a_i,a^*_j$ -- is particularly important (it includes for instance the case where the potential $V(x_1,...,x_n)$ is Taylor expanded around some equilibrium position). This motivates the following generic Weyl polynomial optimization problem 
\be
\lambda_{\inf}(p)=\inf\,\{\bra{\phi}\pi(p)\ket{\phi}:\ket{\phi}\in \mathcal{S},\braket{\phi}{\phi}=1\},
\ee
for an arbitrary hermitian polynomial $p\in {\cal W}_n$.
Note that, alternatively, we can write

\be \label{min}
\lambda_{\inf}(p)=\sup\,\{\lambda\in\R:\pi(p)-\lambda\geq 0\},
\ee
where positivity is understood in the Schwartz space. This reformulation of the problem will be used in the next section.

\subsection{Coherent states}
An interesting subset of $\H$ is constituted by the \emph{coherent states}. For any $\alpha\in\C$, denote by $\ket{\alpha}\in H$ the normalized state 
\be
\ket{\alpha}\equiv e^{-\frac{|\alpha|^2}{2}}\sum_{s\in \N}\frac{\alpha^s}{\sqrt{s!}}\ket{s}.
\ee
Then, a coherent state in $\H$ is any state of the form $\ket{\bar{\alpha}}=\otimes_{i=1}^n\ket{\alpha_i}$, for any $\bar{\alpha}\in\C^n$. Coherent states are important because they are simultaneous eigenstates of the annihilation operators $\{\pi(a_k):k=1,...,n\}$. This follows from the easily verified identity $\tilde{a}\ket{\alpha}=\alpha\ket{\alpha}$ for all $\alpha\in \C$. As a result, for any normally-ordered polynomial $p\in \W_n$, we have that
\be
\bra{\bar{\alpha}}\pi(p)\ket{\bar{\alpha}}=\sum_{\bar{s},\bar{t}}p_{\bar{s},\bar{t}}\,{\alpha^*}^{\bar{s}}\alpha^{\bar{t}}\equiv p(\bar{\alpha},\bar{\alpha}^*).
\ee
As an application of the coherent states, let us show that the decomposition (\ref{normal}) is unique, or, equivalently, that $0$ admits a unique representation. Suppose thus that 
\be
0=\sum_{\bar{s},\bar{t}}p_{\bar{s},\bar{t}}\,{a^*}^{\bar{s}}a^{\bar{t}},
\ee
for some coefficients $p_{\bar{s},\bar{t}}\in\C$. Using the Schr\"{o}dinger representation, we have that 
\be
0=\bra{\bar{\alpha}}\pi(0)\ket{\bar{\alpha}}=\sum_{\bar{s},\bar{t}}p_{\bar{s},\bar{t}}\,{\alpha^*}^{\bar{s}}\alpha^{\bar{t}}\quad \mbox{for all }\bar\alpha\in \C^n.
\ee
The right-hand side is a polynomial in the complex variables $\bar{\alpha},\bar{\alpha}^*$, and it can only be equal to zero for all values of $\bar{\alpha}$ if $p_{\bar{s},\bar{t}}=0$ for all $\bar{s},\bar{t}$.

\section{SOS decompositions, the SDP hierarchy, and the paradox}
\label{method}
Given a hermitian polynomial $p\in\W_n$, we say that $p$ admits a sum-of-squares (SOS) decomposition if there exist polynomials $f_i\in \W_n$ such that
\be
p=\sum_i f_i^*f_i.
\label{cuadrado}
\ee
We denote $\Sigma^2$ the set of all such polynomials.
It is clear that if $p\in\Sigma^2$, then $\pi(p)\geq 0$, since, for any $\ket{\phi}\in {\cal S}$,
\be
\bra{\phi}\pi(p)\ket{\phi}=\sum_i\bra{\phi} \pi(f_i^*)\pi(f_i)\ket{\phi}\geq 0.
\ee
However, the opposite implication is not true, not even in $\W_1$. Indeed, as noted by Schm\"{u}dgen \cite{schmudgen}, the family of polynomials $p_\epsilon=(a_1^*a_1-1)(a_1a_1^*-2)+\epsilon$ satisfies $\pi(p_\epsilon)\geq 0$, for $\epsilon\geq 0$, but nevertheless $p_\epsilon\not\in\Sigma^2$, for $\epsilon<\frac{1}{4}$.

Given a $p\in\W_n$, a possible scheme for finding a lower bound $\lambda^\star$ on the solution $\lambda_{\inf}(p)$ of (\ref{min}) is thus to solve the problem
\be\label{SOS_scheme}
\lambda^\star\equiv\sup\{\lambda\in \R: p-\lambda\in \Sigma^2\}.
\ee
This principle is the one behind the polynomial minimization algorithms developed by Lasserre \cite{moment} and Parrilo \cite{parrilo}, and their non-commutative analogue \cite{POI,chapter}. Such algorithms work by searching for SOS decompositions of $p-\lambda$ with some degree constraint. Applied to the Weyl minimization problem, this results in the following sequence of programs:
\be
\lambda^k\equiv\sup\{\lambda\in \R: p-\lambda\in \Sigma^2_k\}.
\label{SOS_formal}
\ee
Here $k$ is an integer such that $2k\geq \deg(p)$ and indexing the successive programs in the sequence and $\Sigma^2_k$ is the set of polynomials which admit a decomposition of the form (\ref{cuadrado}) with $\deg(f_i)\leq k$, for all $i$. Each of these problems is a semidefinite program, as one can check that 
\be
\lambda^k=\max\{\lambda\in \R: p-\lambda=(\bar{w}^k)^*Z\bar{w}^k, Z\geq 0\},
\label{SOS_SDP}
\ee
where $\bar{w}^k$ is a vector whose components are the normally ordered monomials of degree $\leq k$.
Clearly, $\lambda^k\leq \lambda^{k+1}$ and $\lim_{N\to\infty}\lambda^k=\lambda^\star$.
The programs (\ref{SOS_formal}), (\ref{SOS_SDP}) thus form a converging hierarchy of SDP relaxations for the problem (\ref{SOS_scheme}). Supplemented with a boundedness condition (that is not satisfied in the present case of Weyl polynomials), it can be further be shown that this hierarchy necessarily converges to the optimal solution of the problem (\ref{min}) \cite{POI,chapter}, as problems (\ref{SOS_scheme}) and (\ref{min}) then turn out to be equivalent \cite{positivsten}.

Let $L$ denote an arbitrary functional on the Weyl algebra, i.e., $L:\W_n\to\C$. Then, the dual of problems (\ref{SOS_formal}), (\ref{SOS_SDP}) can be shown to be
\be
\mu^k=\min\{L(p): L(1)=1,L(qq^*)\geq 0\mbox{ for all }q\in{\cal W}_n, \deg(q)\leq k\},
\label{moment_formal}
\ee
or explicitly in SDP form
\be
\mu^k=\min\{\sum_{\bar{s},\bar{t}} p_{\bar{s},\bar{t}}\,y_{\bar{s},\bar{t}}\,:y_{\bar{0},\bar{0}}=1, M_k(y)\geq 0 \},
\label{moment_SDP}
\ee
In this last formulation, $p_{\bar{s},\bar{t}}$ are the coefficients of $p$ in normal form (\ref{normal}),  $y\equiv\{y_{\bar{s},\bar{t}}:\|\bar{s}+\bar{t}\|_1\leq 2k\}\subset \C$ are the optimization variables\footnote{Note that, if $p_{\bar{s},\bar{t}}\in \R,\forall\bar{s},\bar{t}$, one can assume $\{y_{\bar{s},\bar{t}}:\|\bar{s}+\bar{t}\|_1\leq 2k\}\subset \R$.}, and $M_k(y)$ is the \emph{moment matrix of order $k$}, a matrix whose rows and columns are indexed by pairs of vectors $(\bar{s},\bar{t})\in \N^n\times\N^n$, with $\|\bar{s}+\bar{t}\|_1\leq k$, and with entries defined by
\be
\left[M_k(y)\right]_{(\bar{s},\bar{t}),(\bar{u},\bar{v})}=\sum_{\bar{q},\bar{r}}c_{\bar{q},\bar{r}}\,y_{\bar{q},\bar{r}},
\ee
where $\{c_{\bar{q},\bar{r}}\}$ are the normal form coefficients of the monomial  $({a^*}^{\bar{s}}a^{\bar{t}})^*{a^*}^{\bar{u}}a^{\bar{v}}$, i.e.,
\be
({a^*}^{\bar{s}}a^{\bar{t}})^*{a^*}^{\bar{u}}a^{\bar{v}}=\sum_{\bar{q},\bar{r}}c_{\bar{q},\bar{r}}{a^*}^{\bar{q}}a^{\bar{r}}.
\ee
The next lemma shows that problems (\ref{SOS_formal}),(\ref{SOS_SDP}) and (\ref{moment_formal}),(\ref{moment_SDP}) are, in fact, equivalent:
\begin{lemma}
If there exists a feasible point of program (\ref{SOS_SDP}), then $\mu^k=\lambda^k$.
\end{lemma}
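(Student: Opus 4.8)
The plan is to treat (\ref{SOS_SDP}) and (\ref{moment_SDP}) as a genuine finite-dimensional primal--dual pair of semidefinite programs and to invoke the standard (Slater) strong-duality theorem. Finiteness is immediate: there are only finitely many normally ordered monomials of degree $\le k$ in the $2n$ letters, so $\bar w^k$ is a finite vector, $Z$ a finite Hermitian matrix, and $M_k(y)$ a finite Hermitian matrix indexed by the pairs $(\bar s,\bar t)$ with $\|\bar s+\bar t\|_1\le k$. The easy half, \emph{weak duality} $\lambda^k\le\mu^k$, I would dispatch by hand: if $\lambda$ is feasible for (\ref{SOS_SDP}) then $p-\lambda=\sum_i f_i^*f_i$ with $\deg f_i\le k$, while feasibility of a functional $L$ for the dual amounts to $L(1)=1$ together with positivity of $L$ on all such squares (equivalently $M_k(y)\succeq 0$); hence $L(p)-\lambda=L(p-\lambda)=\sum_i L(f_i^*f_i)\ge 0$, and taking the supremum over $\lambda$ and the infimum over $L$ gives $\lambda^k\le\mu^k$.

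For the reverse inequality I would use the conic strong-duality theorem: since the maximization program (\ref{SOS_SDP}) is feasible by hypothesis, it suffices to check that the minimization program (\ref{moment_SDP}) is \emph{strictly} feasible, and then $\lambda^k=\mu^k$ (with the minimum in (\ref{moment_SDP}) attained). So it remains to exhibit a $y$ with $y_{\bar 0,\bar 0}=1$ and $M_k(y)\succ 0$, and for this one can take any faithful state, e.g.\ a tensor product of thermal states $\rho=\bigotimes_{i=1}^n\rho_\beta$ with $\rho_\beta\propto e^{-\beta\,\tilde a^*\tilde a}$, $\beta>0$, putting $L(q)=\tr(\pi(q)\rho)/\tr(\rho)$ and $y_{\bar s,\bar t}=L((a^*)^{\bar s}a^{\bar t})$. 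Writing $\rho=\sigma\sigma^*$ with $\sigma=\sqrt\rho$, cyclicity of the trace shows that $M_k(y)$ equals, up to the positive factor $1/\tr(\rho)$, the Gram matrix in the Hilbert--Schmidt inner product of the operators $\{\pi(w_{\bar s,\bar t})\,\sigma\}$, where $w_{\bar s,\bar t}=(a^*)^{\bar s}a^{\bar t}$ ranges over the normally ordered monomials of degree $\le k$. A short argument — using that $\sigma$ is injective (its spectrum is strictly positive) and that the Schr\"odinger representation is faithful, both consequences of the coherent-state identities that gave uniqueness of the normal form — shows these operators are linearly independent, hence their Gram matrix is positive definite; and $y_{\bar 0,\bar 0}=L(1)=1$ by construction. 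This is the desired strictly feasible point, and with it $\mu^k\le\lambda^k$.

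The only point requiring care — and hence the main obstacle — is the operator-theoretic bookkeeping needed to make the thermal-state construction rigorous in the unbounded Schr\"odinger representation: one must verify that $\rho$ and each $\pi(q)\rho$ are trace class, and that $\sigma$ and the operators $\pi(w_{\bar s,\bar t})\sigma$ are Hilbert--Schmidt, so that $L$ is a well-defined linear functional on $\W_n$ and the Gram-matrix identity above is legitimate. All of this holds because $e^{-\beta\,\tilde a^*\tilde a}$ decays exponentially in the occupation number while normally ordered monomials act only polynomially in it; granting these routine estimates, the lemma follows from Slater's theorem together with the duality relation between (\ref{SOS_SDP}) and (\ref{moment_SDP}) recorded above.
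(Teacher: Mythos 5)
Your proposal is correct and follows essentially the same route as the paper: weak duality plus Slater-type strong duality (the paper's ``Sylvester's criterion'', citing Vandenberghe--Boyd), reduced to exhibiting a strictly feasible point of the moment problem coming from a faithful Gaussian state with finite moments. The paper uses $\Omega\propto\int \mathrm{d}\bar{\alpha}\,e^{-|\bar{\alpha}|^2}\proj{\bar{\alpha}}$ --- which is just your thermal state in another guise --- and verifies $L(p^*p)>0$ directly from the coherent-state bound $\bra{\bar{\alpha}}\pi(p^*p)\ket{\bar{\alpha}}\geq|p(\bar{\alpha},\bar{\alpha}^*)|^2$, whereas you verify strict positivity of $M_k(y)$ via a Hilbert--Schmidt Gram-matrix/linear-independence argument; this is only a cosmetic difference (and note that Slater's condition on the minimization program yields attainment of the \emph{maximization} program, not of the minimum, though this does not affect the equality $\mu^k=\lambda^k$).
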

\begin{proof}
By Sylvester's criterion \cite{sdp}, it just suffices to show that problem (\ref{moment_formal}) admits a strictly feasible point, i.e., that there exists a functional $L$ such that $L(1)=1$ and $L(pp^*)>0$, for all $p\not=0$. 
Now, consider the functional $L(p)=\tr(\Omega\,\pi(p))$, with
\be
\Omega= \frac{1}{(2\pi)^n}\int\! \mathrm{d}\bar{\alpha}\,e^{-|\bar{\alpha}|^2} \proj{\bar{\alpha}}.
\label{state}
\ee
This functional satisfies $L(1)=1$. Also, for any non-zero polynomial $p\neq 0\in \W_n$ we have that
\begin{eqnarray}
L(p^*p)&=&\frac{1}{(2\pi)^n}\int \! \mathrm{d}\bar{\alpha}\, e^{-|\bar{\alpha}|^2} \bra{\bar{\alpha}}\pi(p^*p)\ket{\bar{\alpha}}
\geq\frac{1}{(2\pi)^n}\int\! \mathrm{d}\bar{\alpha}\, e^{-|\bar{\alpha}|^2} |\bra{\bar{\alpha}}\pi(p)\ket{\bar{\alpha}}|^2\nonumber\\
&=&\frac{1}{(2\pi)^n}\int \! \mathrm{d}\bar{\alpha}\,e^{-|\bar{\alpha}|^2} |p(\bar{\alpha},\bar{\alpha}^*)|^2>0.
\end{eqnarray}
The last inequality comes from the fact that $p(\bar{\alpha},\bar{\alpha}')\not=0$ and that the integration takes place in all $\C^n$.
\end{proof}

Even though the above SDP hierarchy cannot be guaranteed to converge to the optimal value $\lambda_{\inf}(p)$ of (\ref{min}), every SDP step provides a lower-bound on  $\lambda_{\inf}(p)$. Based on the successful applications of this SDP hierarchy to fermionic systems and quantum correlations, where in practice very good lower bounds are obtained after only a few SDP relaxations, one could expect a good overall performance also in the context of Weyl polynomials. 
However, as the next result shows, no improvement over the first lower-bound can be obtained by considering higher steps in the hierarchy.
\begin{lemma}
\label{paradox1}
Let $p\in \Sigma^2$. Then, $p\in \Sigma^2_k$ for $2k\leq \deg(p)$.
\end{lemma}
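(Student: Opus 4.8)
\emph{Proof proposal.} The plan is to prove the slightly stronger, purely structural fact that \emph{every} decomposition (\ref{cuadrado}), $p=\sum_i f_i^*f_i$, automatically satisfies $\deg(f_i)\le\deg(p)/2$ for all $i$. Since $\Sigma^2_k\subseteq\Sigma^2_{k+1}$, this gives $p\in\Sigma^2_k$ for every admissible $k$ (those with $2k\ge\deg(p)$), and in particular already at the first step $k=\deg(p)/2$. The mechanism is to compare the top-degree homogeneous components on both sides of $p=\sum_i f_i^*f_i$ when everything is written in the normal form (\ref{normal}).

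First I would isolate the leading parts. Set $D=\max_i\deg(f_i)$ and, for each $i$ with $\deg(f_i)=D$, write $f_i=h_i+r_i$ with $h_i\neq 0$ the degree-exactly-$D$ homogeneous part of $f_i$ (in normal form) and $\deg(r_i)\le D-1$. A routine use of the CCRs shows that the component of degree $2D$ of each $f_i^*f_i$, once brought to normal form, vanishes unless $\deg(f_i)=D$, in which case it equals the degree-$2D$ part of $h_i^*h_i$ (the cross terms $h_i^*r_i,\; r_i^*h_i,\; r_i^*r_i$ all have degree $\le 2D-1$). Hence the degree-$2D$ part of $p$ equals the degree-$2D$ part of $\sum_{i:\deg(f_i)=D}h_i^*h_i$.

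The one genuine computation is the normal-ordering identity for a homogeneous $h=\sum_{\bar s,\bar t}c_{\bar s,\bar t}(a^*)^{\bar s}a^{\bar t}$ of degree $D$: bringing $h^*h$ to normal form and discarding everything below degree $2D$ leaves $\sum_{\bar s,\bar t,\bar u,\bar v}c^*_{\bar s,\bar t}c_{\bar u,\bar v}(a^*)^{\bar t+\bar u}a^{\bar s+\bar v}$, because commuting each $a^{\bar s}$ through each $(a^*)^{\bar u}$ produces $(a^*)^{\bar u}a^{\bar s}$ plus terms of strictly smaller degree. Evaluating this degree-$2D$ element on a coherent state, and using $\bra{\bar\alpha}\pi(q)\ket{\bar\alpha}=q(\bar\alpha,\bar\alpha^*)$ for normally ordered $q$, yields exactly $|h(\bar\alpha,\bar\alpha^*)|^2$. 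Summing over $i$, the coherent-state symbol of the degree-$2D$ part of $p$ is $\sum_{i:\deg(f_i)=D}|h_i(\bar\alpha,\bar\alpha^*)|^2\ge 0$ for all $\bar\alpha\in\C^n$.

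Finally I would close with a degree count. Trivially $\deg(p)\le 2D$. If instead $2D>\deg(p)$, the degree-$2D$ part of $p$ is the zero element of $\W_n$, so the nonnegative function $\sum_i|h_i(\bar\alpha,\bar\alpha^*)|^2$ vanishes identically; hence each $h_i(\bar\alpha,\bar\alpha^*)\equiv 0$, and by the uniqueness of the normal form established in Section~\ref{weyl} (the same polynomial-identity argument used there) we get $h_i=0$, contradicting $h_i\neq 0$. Therefore $2D=\deg(p)$, so $\deg(f_i)\le\deg(p)/2$ for every $i$ (as a by-product $\deg(p)$ is even whenever $p\in\Sigma^2\setminus\{0\}$), and thus $p\in\Sigma^2_k$ for all $k\ge\deg(p)/2$. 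The only delicate point is the ``no cancellation'' step: a priori the degree-$2D$ pieces of the different $h_i^*h_i$ could destroy one another, and what prevents this is precisely that each contributes the nonnegative symbol $|h_i(\bar\alpha,\bar\alpha^*)|^2$, so their sum can vanish only if every term does; everything else is bookkeeping with the CCRs plus the uniqueness statement already available.
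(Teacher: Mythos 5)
Your argument is correct, and it reaches the same structural conclusion as the paper (every decomposition $p=\sum_i f_i^*f_i$ forces $2\deg(f_i)\leq\deg(p)$), but by a genuinely different mechanism. The paper's proof is analytic: it drops all but one square to get the coherent-state inequality $p(\bar{\alpha},\bar{\alpha}^*)\geq|\bra{\bar{\alpha}}\pi(f)\ket{\bar{\alpha}}|^2$ (via positivity of the remaining terms plus Cauchy--Schwarz, eq.~(\ref{intringulis})), and then compares growth rates along a ray $r\bar{\beta}$ chosen so that the leading term of $f$ does not vanish. You instead work algebraically with the top-degree homogeneous parts $h_i$ of the $f_i$: since normal ordering only lowers degree, the degree-$2D$ component of $p$ is exactly $\sum_{i:\deg(f_i)=D}$ of the top part of $h_i^*h_i$, whose coherent-state symbol you correctly compute to be $\sum_i|h_i(\bar{\alpha},\bar{\alpha}^*)|^2$; if $2D>\deg(p)$ this must vanish identically, and nonnegativity of each summand plus uniqueness of the normal form kills every $h_i$, a contradiction. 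Your route avoids both the operator inequality and the asymptotic estimate, handles all squares simultaneously rather than isolating one offending $f$, and makes explicit why no cancellation among the leading terms can occur (it also yields, as you note, that $\deg(p)$ is even for nonzero $p\in\Sigma^2$); the paper's route is shorter to state and needs no bookkeeping of homogeneous components, at the price of the ray-growth argument and the choice of $\bar{\beta}$ with $LT(f)(\bar{\beta},\bar{\beta}^*)\neq 0$. Both proofs rest on the same two pillars already established in Section~\ref{weyl}: the coherent-state symbol of a normally ordered element and the uniqueness of the normal form.
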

\begin{proof}
By hypothesis, $p\in\Sigma^2$, and thus there exist polynomials $f_i$ such that (\ref{SOS}) holds. We will show that all such polynomials satisfy $2\deg(f_i)\leq \deg(p)$.

Suppose, on the contrary, that
\be
p=\sum f_i^*f_i+f^*f,
\ee
with $2\deg(f)>\deg(p)$. Then,
\be
p(\bar{\alpha},\bar{\alpha}^*)=\bra{\bar{\alpha}}\pi(p)\ket{\bar{\alpha}}\geq \bra{\bar{\alpha}}\pi(f^*)\pi(f)\ket{\bar{\alpha}}\geq |\bra{\bar{\alpha}}\pi(f)\ket{\bar{\alpha}}|^2=|f(\bar{\alpha},\bar{\alpha}^*)|^2.
\label{intringulis}
\ee
Now, denote by $LT(f)$ the leading terms of $f$, i.e.,
\be
LT(f)(\bar{\alpha},\bar{\alpha}^*)=\sum_{\|\bar{s}+\bar{t}\|_1=\deg(f)}f_{\bar{s},\bar{t}}(\alpha^{\bar{s}})^*\alpha^{\bar{t}},
\ee
and choose $\bar{\beta}\in\C^n$ such that $LT(f)(\bar{\beta},\bar{\beta}^*)=c\not=0$. Then it is straightforward that $|f(r\bar{\beta},r\bar{\beta^*})|=O(r^{\deg(f)})$. However, $|p(r\bar{\beta},r\bar{\beta}^*)|\leq O(r^{\deg(p)})$. For (\ref{intringulis}) to hold, we must thus have that $2\deg(f)\leq \deg(p)$.
\end{proof}

What Lemma \ref{paradox1} shows is that, for any polynomial $p$, the sequence of values $\lambda^{k_0},\lambda^{k_0+1},...$, with $k_0=\lceil\deg(p)/2\rceil$ is constant and equal to $\lambda^\star$. In other words: the first SDP relaxation of the problem already provides the best approximation to $\lambda_{{\inf}}(p)$ attainable with SOS decompositions.

How does such an approximation perform? Consider the uniparametric family of one-dimensional hamiltonians $\{E_m:m\in\R\}$, with
\be
E_m=\frac{1}{2}\tilde{p}^2+m\tilde{x}^2+\tilde{x}^4,
\ee
with $\tilde{x}\equiv (\tilde{a}+\tilde{a}^*)/\sqrt{2}$, $\tilde{p}\equiv (\tilde{a}-\tilde{a}^*)/i\sqrt{2}$. Note that, for $m<0$, $E_m$ corresponds to the interesting double-well potential. Figure \ref{first_rel} shows a plot of $\lambda^2$ as a function of $m$, together with an upper bound on $E_m$ obtained through variational methods. We used the solver SDPT3-4.0 \cite{SDPT3} and the MATLAB package YALMIP \cite{yalmip} to carry out the SDP calculations. It is clear that, as soon as $m<0$, the approximation given by $\lambda^2$ becomes worse.
\begin{figure}
  \centering
  \includegraphics[width=12 cm]{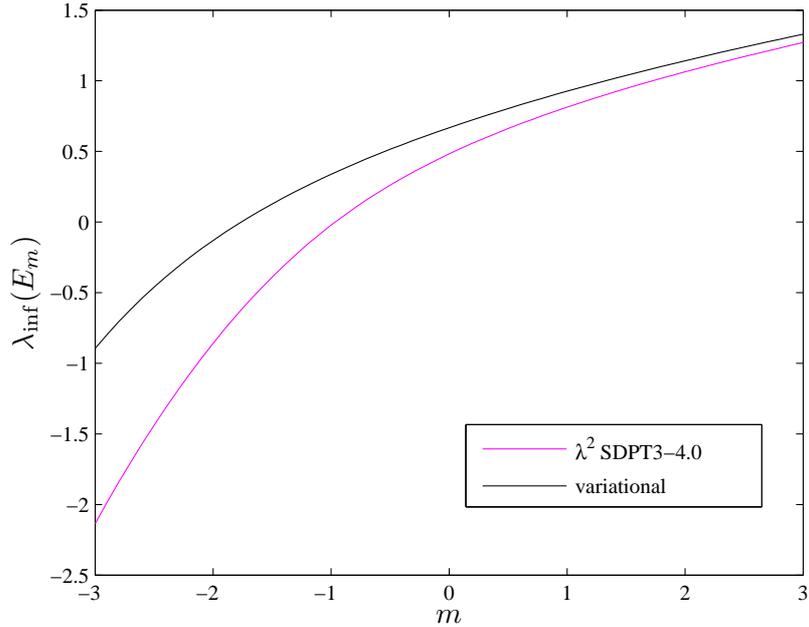}
  \caption{Bounds on $\lambda_{\inf}(E_m)$ as a function of $m$. The upper curve represents an upper bound obtained through variational techniques. The lower curve represents the lower bound $\lambda^2$ corresponding to the first relaxation of the problem.}
  \label{first_rel}
\end{figure}

From our discussions above, it follows that $\lambda^2=\lambda^3=\lambda^4=...$, and so $\lambda^2$ represents the best lower bound on $\lambda_{\inf}(e_m)$ achievable with the SDP hierarchy. Figure \ref{all_rel} shows, however, that such is not the case. Indeed, we see that subsequent relaxations of the problem return lower bounds which are closer and closer to the variational upper bound, until, at $\lambda^6$, both bounds become practically indistinguishable. What is happening?
\begin{figure}
  \centering
  \includegraphics[width=12 cm]{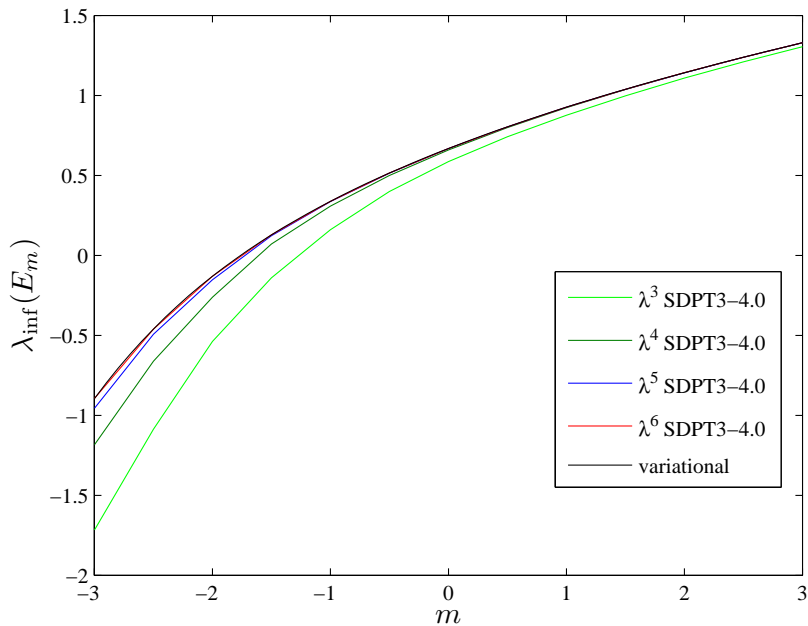}
  \caption{Lower-bounds on $\lambda_{\inf}(E_m)$ as a function of $m$ corresponding to the second, third, fourth and fifth SDP relaxations, in ascending order. The black line represents the variational estimate, quite close to $\lambda^6$ (red line).}
  \label{all_rel}
\end{figure}

\section{Resolution of the paradox}
\label{positivstellensatz}
As mentioned in the introduction, the above paradox is not new in commutative polynomial optimization: indeed, Henrion and Lasserre \cite{henrion} noticed that the numerical implementation of their SDP algorithm for polynomial minimization returned the optimal value of the 2-dimensional Motzkin polynomial, instead of its SOS value ($-\infty$). Lasserre successfully solved this paradox by proving that any commutative positive polynomial can be approximated arbitrarily well by an SOS decomposition \cite{approx}. The accepted resolution of the paradox was that the rounding errors occurring during the numerical computations perturbed the polynomial $p$ to be minimized to another one of higher degree admitting an SOS decomposition \cite{waki}.

In this Section we will prove a non-commutative analog of this result, namely, that Weyl polynomials which are positive semidefinite in the Schr\"{o}dinger representation can be perturbed to a higher degree polynomial in $\Sigma^2$. This is formally stated in the following Theorem: 
\begin{theo}\label{principal}
Let $p$ be an element of ${\cal W}_n$ such that $\pi(p)\geq 0$. Then, for any $\epsilon,\delta>0$, there
exists a polynomial $\tilde{p}\in \Sigma^2$ such that $l_1(p-\tilde{p})<\epsilon$
and $|\lambda_{\inf}(\tilde{p})-\lambda_{\inf}(p)|<\delta$.
\end{theo}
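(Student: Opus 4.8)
\emph{Proof strategy.} The plan is to follow the route by which Lasserre and Netzer resolved the commutative ``paradox'' recalled above: perturb $p$ by a small but manifestly positive polynomial of high degree, and then check that the perturbation disturbs neither $l_1(p)$ nor $\lambda_{\inf}(p)$ appreciably. For $r\in\N$ put
\[
\Theta_r:=\sum_{i=1}^{n}\sum_{k=0}^{r}\frac{1}{k!}\,(a_i^*)^k a_i^k .
\]
Three properties of $\Theta_r$ will be used, all uniform in $r$: (a) $\Theta_r=\sum_{i,k}(a_i^k/\sqrt{k!})^*(a_i^k/\sqrt{k!})\in\Sigma^2$; (b) its normal form has $l_1(\Theta_r)=n\sum_{k=0}^r 1/k!\le ne$; (c) as operators $\id_\H\le\pi(\Theta_r)\le\sum_i 2^{a_i^* a_i}$, the left inequality coming from the $k=0$ term and the right one because $\tfrac1{k!}(a_i^*)^k a_i^k=\binom{a_i^* a_i}{k}$ while $\sum_{k\ge0}\binom{m}{k}=2^m$.

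\emph{Reduction to one claim.} Everything reduces to the following statement, whose proof is discussed below.

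\emph{Claim.} If $\pi(p)\ge0$ then for every $\eta>0$ there exists $r$ with $p+\eta\,\Theta_r\in\Sigma^2$.

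Granting the Claim, fix $\epsilon,\delta>0$. Choose a unit vector $\phi$ with finite support in the number basis (hence $\phi\in\mathcal{S}$) that is near-optimal for $p$, i.e.\ $\bra{\phi}\pi(p)\ket{\phi}<\lambda_{\inf}(p)+\delta/2$; such $\phi$ exist because finitely supported vectors are dense in $\mathcal{S}$ in the topology for which $\psi\mapsto\bra{\psi}\pi(p)\ket{\psi}$ is continuous. Since $\phi$ is supported on $\{\,|\bar u\rangle:\|\bar u\|_1\le M\,\}$ for some $M$, the quantity $C:=\bra{\phi}\sum_i 2^{a_i^* a_i}\ket{\phi}$ is finite and, crucially, independent of $r$. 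Now pick $\eta>0$ with $\eta<\epsilon/(ne)$ and $\eta<\delta/(2C)$, and use the Claim to obtain $r$ with $\tilde p:=p+\eta\,\Theta_r\in\Sigma^2$. Then $\tilde p\in\Sigma^2$; $l_1(p-\tilde p)=\eta\,l_1(\Theta_r)\le\eta ne<\epsilon$ by (b); $\lambda_{\inf}(p)\le\lambda_{\inf}(\tilde p)$ because $\pi(\tilde p)-\pi(p)=\eta\,\pi(\Theta_r)\ge0$; and, using (c) so that $\bra{\phi}\pi(\Theta_r)\ket{\phi}\le C$, we get $\lambda_{\inf}(\tilde p)\le\bra{\phi}\pi(\tilde p)\ket{\phi}=\bra{\phi}\pi(p)\ket{\phi}+\eta\,\bra{\phi}\pi(\Theta_r)\ket{\phi}<\lambda_{\inf}(p)+\delta/2+\eta C<\lambda_{\inf}(p)+\delta$. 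Hence $|\lambda_{\inf}(\tilde p)-\lambda_{\inf}(p)|<\delta$, and the Theorem follows.

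\emph{How I would prove the Claim, and where the difficulty lies.} The Claim is the non-commutative analogue of the Lasserre--Netzer perturbation theorem \cite{approx}. Positivity of $\pi(p)$ alone cannot suffice — there are positive Weyl polynomials outside $\Sigma^2$, as Schm\"udgen's example shows — so the coercive, high-degree term $\eta\Theta_r$ must enter essentially; note that $\deg\tilde p=\max(\deg p,2r)$ will indeed be larger than $\deg p$. I would adapt the commutative argument to the Fock representation: the number states $\{|\bar s\rangle\}$ play the role that the orthonormal (Hermite) polynomials play there, and from the truncated reproducing kernel $K_r=\sum_{\|\bar s\|_1\le r}|\bar s\rangle\!\langle\bar s|$ one builds sum-of-squares minorants of $p+\eta\Theta_r$ of growing degree, the positive semidefiniteness of the relevant finite Gram matrix being exactly what $\pi(p)\ge0$ together with the strict positivity and coercivity furnished by $\eta\Theta_r$ delivers. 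In dual language: were the Claim false for some $\eta>0$, Hahn--Banach in the $l_1$-completion of $\W_n$ would produce a functional $L$ with bounded normal-ordered coefficients, normalized so that $L(1)=1$, with $L(q^*q)\ge0$ for all $q\in\W_n$ but $L(p+\eta\Theta_r)<0$ for every $r$; one would then show that such an $l_1$-continuous positive functional necessarily respects positivity in the Schr\"odinger representation — for instance by exhibiting it as a pointwise limit of functionals $q\mapsto\tr(\rho_N\,\pi(q))$ with $\rho_N\ge0$ supported on $\mathcal{S}$ — which, since $\pi(\Theta_r)\ge0$, contradicts $L(p+\eta\Theta_r)<0$. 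I expect this very point to be the main obstacle: controlling the commutator and normal-ordering corrections produced by squaring and regularizing, so that the finite moment (Gram) matrix stays positive semidefinite — equivalently, characterizing the $l_1$-continuous positive functionals on $\W_n$. Once the Claim is established, the remainder is, as above, bookkeeping.
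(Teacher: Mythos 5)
Your bookkeeping reduction is fine and essentially reproduces the paper's Lemmas \ref{l1p} and \ref{minimum}: your normally ordered $\Theta_r$ makes the $l_1$ bound and the uniform-in-$r$ bound $\bra{\phi}\pi(\Theta_r)\ket{\phi}\leq C$ on a finitely supported near-optimal state even cleaner than the paper's computation with $g^r_c$. But the Claim you isolate --- $p+\eta\Theta_r\in\Sigma^2$ for some $r$ --- is not a preliminary step; it is the theorem (the paper's Lemma \ref{SOS}), and you leave it as a sketch whose two key ingredients are exactly what is missing. First, your Hahn--Banach separation in the $l_1$-completion does not go through as stated: the negation of the Claim only says $p+\eta\Theta_r\notin\Sigma^2$ for each $r$, not that it lies outside the \emph{closure} of $\Sigma^2$, and separating a point from a non-closed convex cone is impossible in general --- indeed the theorem being proved says precisely that the closure of $\Sigma^2$ swallows all positive elements, so any separation argument must be carried out at fixed finite degree. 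The paper does this by setting up, for each $r$, the truncated SDP $\epsilon^\star_r=\min\{L(p):L(g^r_c)\leq 1,\ L(h^*h)\geq 0\}$, proving attainability and zero duality gap (strict feasibility via the Gaussian state $\Omega$, and boundedness of the feasible set via the anti-normal-ordering domination Lemma \ref{posi_anti}, which is why the perturbation is taken anti-normally ordered with the combinatorial weights in $g^r_c$), and then letting $r\to\infty$ along a Banach--Alaoglu convergent subsequence of optimizers.

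Second, and more fundamentally, your sketch assumes that an $l_1$-continuous (or moment-bounded) positive functional on $\W_n$ ``respects positivity in the Schr\"odinger representation.'' That is not automatic --- positivity of $L$ on squares alone does not give $L(p)\geq 0$ from $\pi(p)\geq 0$, since not every positive functional on a Weyl algebra is induced by a trace-class operator in the Fock representation; a growth condition on the moments is essential. This is the content of the paper's Lemma \ref{represent}: if $|L(s)|\leq d\,c^{|s|}\Gamma(\tfrac{|s|+k+1}{2})$ for all monomials $s$, then $L(h)=\tr(\rho\,\pi(h))$ for a non-negative trace-class $\rho$, proved via Weyl operators, analyticity of the characteristic function, and the quantum Bochner--Khinchin theorem. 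Establishing this representation theorem, and verifying that the limiting functional produced by the finite-level SDPs actually satisfies the required $\Gamma$-growth (which in the paper follows from $L(g^r_c)\leq 1$ together with Lemma \ref{posi_anti} and Proposition \ref{cachonda}; with your $\Theta_r$ the analogous uniform bound on all mixed, reordered moments would still have to be derived), is the bulk of the work. As it stands, your proposal proves the easy half of the theorem and defers the hard half to a strategy that, in the crucial separation and representation steps, would fail or remain unjustified without the machinery of Lemmas \ref{posi_anti} and \ref{represent}.
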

The proof of this theorem follows straightforwardly from the next three lemmas. In these lemmas, the constant $c$ is arbitrary but fixed to be $c>2$.
\begin{lemma}\label{SOS}
Let $p\in \W_n$ be a polynomial such that $\pi(p)\geq 0$. Let 
\begin{equation}
g^r_c\equiv \sum_{\|\bar{t}\|_1\leq
r} \frac{(n-1)!}{c^{\|\bar{t}\|_1}(n+\|\bar{t}\|_1-1)!}a^{\bar{t}}(a^{\bar{t}})^*,
\label{gs}
\end{equation} 
where the sum runs over all vectors $\bar{t}\in\N^n$ of length $\|\bar{t}\|_1$ less or equal than $r$. Then for any $\epsilon>0$, there exists some $r_0$ such that for all $r>r_0$,
\be
\tilde{p}^r=p+\epsilon g^r_c\in \Sigma^2.
\ee
\end{lemma}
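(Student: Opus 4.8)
The plan is to argue by convex duality between the truncated SOS cones and truncated moment functionals, and then to invoke a representability (nuclearity) argument; the point of adding $\epsilon g^r_c$ with $r\to\infty$ is that it manufactures a positive functional whose value on an operator growing geometrically in the particle number is finite, which is exactly the condition needed to recognise that functional as a genuine density-matrix state. First, note that each $\Sigma^2_k$ is closed in the finite-dimensional space of Weyl polynomials of degree $\le 2k$: it is the image of the PSD cone under $Z\mapsto(\bar w^k)^*Z\bar w^k$, and this map annihilates a PSD matrix only if that matrix is $0$ (if $\sum_i f_i^*f_i=0$ then, applying $\pi$ to any Schwartz vector, each $\pi(f_i)=0$, hence $f_i=0$ since $\pi$ is faithful), so the image of the closed cone is closed. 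Assume, for contradiction, that $p+\epsilon g^r_c\notin\Sigma^2$ for a sequence $r=r_j\to\infty$. For each $j$ and each $k$ the point $p+\epsilon g^{r_j}_c$ can be strictly separated from $\Sigma^2_k$; letting $k=k_j\to\infty$ (with $k_j\ge 2r_j$) and passing to a diagonal subsequence yields functionals $L_j$ on $\W_n$ with $L_j(q^*q)\ge0$ whenever $\deg q\le k_j$ and $L_j(p+\epsilon g^{r_j}_c)<0$. Since $g^r_c=1+\sum_{1\le\|\bar t\|_1\le r}\lambda_{\bar t}\,a^{\bar t}(a^{\bar t})^*$ with $\lambda_{\bar t}>0$ and $a^{\bar t}(a^{\bar t})^*=\bigl(\sqrt{\lambda_{\bar t}}\,(a^*)^{\bar t}\bigr)^*\bigl(\sqrt{\lambda_{\bar t}}\,(a^*)^{\bar t}\bigr)\in\Sigma^2$, we get $L_j(g^{r_j}_c)\ge L_j(1)\ge0$, and $L_j(1)=0$ is impossible (by Cauchy--Schwarz it would force $L_j$ to vanish on all polynomials of degree $\le k_j$, contradicting $L_j(p+\epsilon g^{r_j}_c)<0$). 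Rescaling, normalise $L_j(g^{r_j}_c)=1$; then $L_j(1)\le1$ and $L_j(p)<-\epsilon$.

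Next, pass to a limit. The normal ordering of $a^{\bar t}(a^{\bar t})^*$ is $(a^*)^{\bar t}a^{\bar t}$ plus a nonnegative combination of the monomial squares $(a^{\bar u})^*a^{\bar u}$ with $\bar u<\bar t$, so for a positive functional $L$ of high enough order $L\bigl((a^*)^{\bar t}a^{\bar t}\bigr)\le L\bigl(a^{\bar t}(a^{\bar t})^*\bigr)$; and $g^r_c-\lambda_{\bar t}a^{\bar t}(a^{\bar t})^*\in\Sigma^2$ gives $L\bigl(a^{\bar t}(a^{\bar t})^*\bigr)\le\lambda_{\bar t}^{-1}L(g^r_c)$. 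Combined with Cauchy--Schwarz for $(q_1,q_2)\mapsto L_j(q_1^*q_2)$, this yields $\bigl|L_j\bigl((a^*)^{\bar s}a^{\bar t}\bigr)\bigr|\le(\lambda_{\bar s}\lambda_{\bar t})^{-1/2}$ as soon as $\|\bar s\|_1,\|\bar t\|_1\le r_j$, a bound that becomes uniform in $j$ once $r_j$ is large enough. A diagonal subsequence then makes every moment converge, and the limits define a functional $L$ with $L(q^*q)\ge0$ for \emph{all} $q\in\W_n$, with $L(p)\le-\epsilon$ (so $L\not\equiv0$, hence $L(1)>0$; renormalise $L(1)=1$), and -- by Fatou applied to the expansion $g^\infty_c:=\lim_{r\to\infty}g^r_c=\sum_{\bar t}\lambda_{\bar t}a^{\bar t}(a^{\bar t})^*$ into the manifestly nonnegative quantities $L\bigl(a^{\bar t}(a^{\bar t})^*\bigr)$ -- with $L(g^\infty_c)\le1<\infty$.

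The decisive step is then to show that $L$ is implemented by a density operator: $L(q)=\tr\bigl(\Omega_L\,\pi(q)\bigr)$ for some positive trace-class $\Omega_L$ with $\tr\Omega_L=1$. An elementary computation in the Schr\"odinger representation shows that $\pi(g^\infty_c)$ is diagonal in the number basis with eigenvalue on $\ket{\bar s}$ bounded below by $c_1\,e^{c_2\|\bar s\|_1}$ for constants $c_1,c_2>0$ depending only on $n$ and the fixed $c>2$ (for $n=1$ it is exactly $(c/(c-1))^{\|\bar s\|_1+1}$). Hence $L(g^\infty_c)<\infty$ forces the moments of the total number operator $N=\sum_i a_i^*a_i$ to satisfy $L(N^k)\le C^k k!\,L(g^\infty_c)$, so $L(e^{\beta N})<\infty$ for all small $\beta>0$; this exponential control is exactly what forces the GNS representation attached to $L$ to be, up to unitary equivalence, a subrepresentation of the Schr\"odinger representation, implemented by a trace-class $\Omega_L$ -- a quantum Carleman/nuclearity argument. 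Granting this, and using the coherent-state identity $\bra{\bar\alpha}\pi(p)\ket{\bar\alpha}=p(\bar\alpha,\bar\alpha^*)\ge0$ (the same identity that bridges to Lasserre's commutative approximation theorem: indeed the Wick symbol of $g^r_c$ approximates $e^{\|\bar\alpha\|^2/c}$, the analogue of Lasserre's perturbation), we get $L(p)=\tr\bigl(\Omega_L\,\pi(p)\bigr)\ge0$, contradicting $L(p)<0$. Therefore $p+\epsilon g^r_c\in\Sigma^2$ for all large $r$, which is the claim.

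I expect the main obstacle to be precisely this representability/nuclearity step: a positive linear functional on a Weyl algebra need not come from a state -- this is the algebraic reflection of the strict inclusion $\Sigma^2\subsetneq\{p:\pi(p)\ge0\}$ behind Schm\"udgen's example -- so one genuinely must exploit the geometric growth of the spectrum of $\pi(g^\infty_c)$, and also make quantitative how large $r$ must be for the truncation $g^r_c$ to reproduce enough of that growth to keep the moment bounds uniform. The remaining ingredients (closedness of $\Sigma^2_k$, the normal-ordering and Cauchy--Schwarz moment bounds, Fatou, and the spectral computation for $\pi(g^\infty_c)$) are routine.
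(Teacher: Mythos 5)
Your overall skeleton is essentially the paper's: a duality between truncated SOS cones and positive truncated moment functionals (you use closedness of $\Sigma^2_k$ plus Hahn--Banach separation where the paper sets up the Lasserre--Netzer primal/dual SDP pair and checks there is no duality gap), uniform bounds on the moments coming from the fact that $g^r_c$ dominates the normally ordered squares (the paper's Lemma \ref{posi_anti} and Proposition \ref{cachonda}), a diagonal/Banach--Alaoglu limit functional $L$ with $L(q^*q)\ge 0$ for all $q$ and $L(p)<0$, and finally a representation theorem turning $L$ into a trace-class operator on which the hypothesis $\pi(p)\ge 0$ can be evaluated to reach a contradiction. The problem is that this last step --- which you yourself call ``the decisive step'' and then dispose of with ``granting this'' --- is the nontrivial content of the lemma. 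In the paper it is Lemma \ref{represent}, whose proof occupies Appendix \ref{states_sch}: from the growth bound $|L(s)|\le d\,c^{|s|}\Gamma(\tfrac{|s|+k+1}{2})$ one constructs the characteristic function $\chi(\bar{\xi})=L(e^{i\bar{\xi}\sigma\bar{R}})$ (the growth condition is exactly what makes this series converge and be analytic), verifies term by term that the resulting operator reproduces $L$ on all polynomials, and invokes the quantum Bochner--Khinchin theorem to certify that $\chi$ is the characteristic function of a non-normalized density operator. A gesture towards ``a quantum Carleman/nuclearity argument'' does not substitute for this; as you yourself note, a positive functional on $\W_n$ need not come from a state, so the step cannot be waved through, and without it the proof is incomplete.

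There is also a logical slip in the step feeding into it: you deduce $L(N^k)\le C^k k!\,L(g^\infty_c)$ from the fact that $\pi(g^\infty_c)$ is diagonal with geometrically growing eigenvalues in the number basis. An operator inequality in the Schr\"{o}dinger representation constrains an abstract positive functional $L$ only if it is witnessed by an SOS certificate in the algebra --- and the failure of precisely this implication ($\pi(q)\ge 0\not\Rightarrow q\in\Sigma^2$) is what the whole paper is about, so as written the inference is circular. It is repairable: since the normal form of $a^{\bar{t}}(a^{\bar{t}})^*$ has nonnegative coefficients (Proposition \ref{cachonda}), diagonal polynomials such as $N^k$ are dominated by explicit multiples of $g^r_c$ modulo $\Sigma^2$, which yields the moment growth algebraically --- indeed your Cauchy--Schwarz bound $|L((a^*)^{\bar{s}}a^{\bar{t}})|\le(\lambda_{\bar{s}}\lambda_{\bar{t}})^{-1/2}$ already gives the $\Gamma$-type growth the paper needs. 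But that bound is exactly the \emph{hypothesis} of the unproven representability lemma, not its proof, so the gap in the previous paragraph remains the essential one.
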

\begin{lemma}\label{l1p} Let $\tilde{p}^r$ be defined as in Lemma \ref{SOS}. Then
\begin{equation}
l_1(p-\tilde{p}^r)\leq\epsilon\cdot\frac{c}{c-2},
\end{equation}
for any $r>0$.
\end{lemma}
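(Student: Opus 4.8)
The plan is to reduce the claim to a geometric‑series estimate for $l_1(g^r_c)$. By the definition of $\tilde{p}^r$ in Lemma~\ref{SOS} we have $p-\tilde{p}^r=-\epsilon\,g^r_c$, hence $l_1(p-\tilde{p}^r)=\epsilon\,l_1(g^r_c)$, and it suffices to show $l_1(g^r_c)\le c/(c-2)$ for every $r$. Since $g^r_c$ is presented in anti‑normal order, I would first apply the triangle inequality for $l_1$ together with the fact that $l_1$ is multiplicative over products of polynomials in disjoint sets of modes (by commutativity of distinct modes the normal form of such a product is the product of the normal forms, and distinct monomials cannot cancel), obtaining
\be
l_1(g^r_c)\;\le\;\sum_{\|\bar t\|_1\le r}\frac{(n-1)!}{c^{\|\bar t\|_1}(n+\|\bar t\|_1-1)!}\;\prod_{i=1}^n l_1\!\big(a_i^{t_i}(a_i^*)^{t_i}\big).
\ee

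The key single‑mode input is the bound $l_1\big(a^t(a^*)^t\big)\le 2^t\,t!$. To prove it I would start from the normal‑ordering identity $a^t(a^*)^t=\sum_{k=0}^t\binom{t}{k}^2 k!\,(a^*)^{t-k}a^{t-k}$, which follows by a short induction on $t$ using $[a,a^*]=1$. All coefficients being non‑negative, $l_1\big(a^t(a^*)^t\big)=\sum_{k=0}^t\binom{t}{k}^2k!$; re‑indexing by $j=t-k$ and using $\binom{t}{k}k!=t!/(t-k)!$ turns this into $t!\sum_{j=0}^t\binom{t}{j}/j!\le t!\sum_{j=0}^t\binom{t}{j}=2^t t!$. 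Substituting this and grouping the sum by $T:=\|\bar t\|_1$, I would finish using the two elementary facts that $\prod_i t_i!\le T!$ (the multinomial coefficient $\binom{T}{t_1,\dots,t_n}$ is a positive integer) and that there are $\binom{T+n-1}{n-1}$ vectors $\bar t\in\N^n$ with $\|\bar t\|_1=T$; these give $\sum_{\|\bar t\|_1=T}\prod_i t_i!\le\binom{T+n-1}{n-1}T!=(T+n-1)!/(n-1)!$, which cancels the prefactor $(n-1)!/(n+T-1)!$ exactly, so the $T$‑th group contributes at most $(2/c)^T$ and
\be
l_1(g^r_c)\;\le\;\sum_{T=0}^{r}\Big(\tfrac{2}{c}\Big)^T\;\le\;\sum_{T\ge 0}\Big(\tfrac{2}{c}\Big)^T\;=\;\frac{c}{c-2},
\ee
the series converging precisely because $c>2$; combined with the first reduction this yields $l_1(p-\tilde{p}^r)\le\epsilon\,c/(c-2)$.

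The routine ingredients (the normal‑ordering identity, stars‑and‑bars, the multinomial bound, multiplicativity of $l_1$ across disjoint modes) are all standard, so I do not expect a genuine obstacle. The one point that actually has to be handled with care is tracking constants tightly enough in the single‑mode estimate: one must land on $2^t t!$, not on the looser $\binom{2t}{t}t!\le 4^t t!$ that the naive Vandermonde bound would supply, since only the former makes the final geometric series sum to $c/(c-2)$ under the stated hypothesis $c>2$. A secondary, minor subtlety is that $g^r_c$ is written in anti‑normal rather than normal form, so its $l_1$‑norm has to be controlled through the triangle inequality and disjoint‑mode multiplicativity instead of by reading coefficients directly.
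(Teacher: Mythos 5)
Your proof is correct and follows essentially the same route as the paper's (Appendix A): the same normal-ordering identity $a^t(a^*)^t=\sum_m \frac{t!^2}{m!^2(t-m)!}(a^*)^m a^m$ giving the single-mode bound $2^t t!$, the same multinomial/superadditivity step to handle several modes, the same stars-and-bars count cancelling the $(n-1)!/(n+T-1)!$ prefactor, and the same geometric series in $2/c$. The only cosmetic difference is that you prove the ordering identity by induction while the paper derives it from coherent-state expectation values.
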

\begin{lemma}\label{minimum}
Let $p$ be a hermitian polynomial in ${\cal W}_n$ such that $\lambda_{\inf}(p)$ exists, and let $g^r_c$ be defined as in (\ref{gs}). Then, for any $\delta>0$, there exists a number $q\in\R^+$ such that 
\be
\lambda_{\inf}(p) \leq \lambda_{\inf}(p+\epsilon g^r_c)\leq\lambda_{\inf}(p)+\delta+\epsilon q,
\ee

\noindent for all $r,\epsilon>0$.
\end{lemma}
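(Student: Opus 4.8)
I would prove the two inequalities separately. The left one I can dispatch immediately: each word $a^{\bar{t}}(a^{\bar{t}})^*$ satisfies $\bra{\phi}\pi(a^{\bar{t}}(a^{\bar{t}})^*)\ket{\phi}=\|\pi((a^*)^{\bar t})\ket{\phi}\|^2\geq 0$, so, since the coefficients in (\ref{gs}) are non-negative, $\bra{\phi}\pi(g^r_c)\ket{\phi}\geq 0$ for every unit $\ket{\phi}\in\mathcal{S}$. Hence $\bra{\phi}\pi(p+\epsilon g^r_c)\ket{\phi}\geq\bra{\phi}\pi(p)\ket{\phi}\geq\lambda_{\inf}(p)$, and taking the infimum over $\ket{\phi}$ gives $\lambda_{\inf}(p+\epsilon g^r_c)\geq\lambda_{\inf}(p)$.

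For the right inequality the plan is to exhibit a \emph{single} unit vector $\ket{\phi}$ with two properties: (i) $\bra{\phi}\pi(p)\ket{\phi}\leq\lambda_{\inf}(p)+\delta$, and (ii) the weight $W(\phi):=\sum_{\bar t\in\N^n}\frac{(n-1)!}{c^{\|\bar t\|_1}(n+\|\bar t\|_1-1)!}\,\|\pi((a^*)^{\bar t})\ket{\phi}\|^2$ is finite. Granting such a $\ket{\phi}$, I set $q:=W(\phi)$; since the coefficients in (\ref{gs}) are non-negative, $\bra{\phi}\pi(g^r_c)\ket{\phi}\leq W(\phi)=q$ for \emph{every} $r$, so $\lambda_{\inf}(p+\epsilon g^r_c)\leq\bra{\phi}\pi(p)\ket{\phi}+\epsilon\,\bra{\phi}\pi(g^r_c)\ket{\phi}\leq\lambda_{\inf}(p)+\delta+\epsilon q$ for all $r,\epsilon>0$, and crucially $q$ depends only on $p$, $c$, $\delta$, not on $r$ or $\epsilon$.

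To construct $\ket{\phi}$ I would take a minimizing sequence for $\lambda_{\inf}(p)$ (which exists by hypothesis), pick a Schwartz vector $\ket{\psi}$ with $\bra{\psi}\pi(p)\ket{\psi}\leq\lambda_{\inf}(p)+\delta/2$, expand it in the number basis, and truncate: let $\ket{\phi_N}$ be the normalization of $P_N\ket{\psi}$, where $P_N$ projects onto $\{\ket{\bar s}:\|\bar s\|_1\leq N\}$. Because $\ket{\psi}$ is Schwartz its number-basis coefficients decay faster than any polynomial, so $\pi(p)\ket{\psi}=\sum_{\bar s}\psi_{\bar s}\pi(p)\ket{\bar s}$ converges absolutely in $\H$; hence $P_N\ket{\psi}\to\ket{\psi}$ and $\pi(p)P_N\ket{\psi}\to\pi(p)\ket{\psi}$ in $\H$, which forces $\bra{\phi_N}\pi(p)\ket{\phi_N}\to\bra{\psi}\pi(p)\ket{\psi}$. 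I then fix $N$ large enough that $\bra{\phi_N}\pi(p)\ket{\phi_N}\leq\lambda_{\inf}(p)+\delta$ and set $\ket{\phi}:=\ket{\phi_N}$, securing (i); the point of the truncation is that $\ket{\phi}$ now has \emph{finite} support $\|\bar s\|_1\leq N$.

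For (ii) I would use the number-basis action $\pi((a^*)^{\bar t})\ket{\bar s}=\bigl(\prod_i (s_i+t_i)!/s_i!\bigr)^{1/2}\ket{\bar s+\bar t}$ together with $(s_i+t_i)!/s_i!=\binom{s_i+t_i}{t_i}\,t_i!\leq 2^{s_i+t_i}t_i!$, which gives $\|\pi((a^*)^{\bar t})\ket{\phi}\|^2\leq 2^{N+\|\bar t\|_1}\prod_i t_i!$ on the support $\|\bar s\|_1\leq N$. Grouping the sum for $W(\phi)$ by $d=\|\bar t\|_1$, bounding $\sum_{\|\bar t\|_1=d}\prod_i t_i!\leq\binom{n+d-1}{n-1}d!$, and invoking the exact cancellation $\frac{(n-1)!\,d!}{(n+d-1)!}\binom{n+d-1}{n-1}=1$, everything collapses to $W(\phi)\leq 2^N\sum_{d\geq 0}(2/c)^d=2^N\,c/(c-2)<\infty$, which converges precisely because $c>2$; then $q:=2^N c/(c-2)$ works. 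I expect the main obstacle to be exactly this double constraint on $\ket{\phi}$: a generic near-optimal Schwartz state need not have finite weight $W$, so the regularization by number-basis truncation seems unavoidable, and the final estimate only closes because the coefficients of $g^r_c$ in (\ref{gs}) were tailored to make the residual sum geometric with ratio $2/c$, which is also where the standing assumption $c>2$ enters.
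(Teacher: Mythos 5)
Your proof is correct and follows essentially the same route as the paper's: replace a near-minimizing Schwartz state by a finite linear combination of number states, then bound $\bra{\phi}\pi(g^r_c)\ket{\phi}$ uniformly in $r$ by exploiting the inverse-multinomial coefficients in (\ref{gs}) to collapse the sum into a convergent geometric-type series, yielding a $q$ independent of $r$ and $\epsilon$. The only differences are cosmetic (truncation by $\|\bar s\|_1\leq N$ instead of $\|\bar m\|_\infty\leq M$, and the constant $2^N c/(c-2)$ versus $(c/(c-1))^{M+1}$), and you additionally spell out the truncation-convergence step and the trivial left inequality that the paper leaves implicit.
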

The proofs of Lemma \ref{SOS} and \ref{minimum} are given here below. The proof of Lemma \ref{l1p} follows from the results presented in Appendix \ref{distancia}. This Lemma implies that $p$ and $\tilde{p}^r$ can be made arbitrarily close
if we take $\epsilon$ small enough. Moreover, if we express $p-\tilde{p}^r$ in the anti-normal form, i.e., $p-\tilde{p}^r=\sum_{\bar{s},\bar{t}}p^a_{\bar{s},\bar{t}}a^{\bar{s}}(a^{\bar{s}})^*$, the corresponding natural norm $l_1^a(p-\tilde{p}^r)\equiv\sum_{\bar{s},\bar{t}}|p^a_{\bar{s},\bar{t}}|$ is trivially bounded by $\epsilon\cdot e^{1/c}$. This implies that any computer implementation of program (\ref{SOS_SDP}) where the polynomial to minimize is expressed in normal or anti-normal form will require a lot of precision in order to distinguish $\tilde{p}^r$ from $p$ for low values of $\epsilon$.

\vspace{10pt}
\noindent\emph{Proof of Lemma \ref{SOS}.} The demonstration of this Lemma will make use of two lemmas, proven in Appendices \ref{antinormal}, \ref{states_sch}, respectively.

\begin{lemma}
\label{posi_anti}
Let $s\in {\cal W}_n$ be a monomial. Then,
\be
\ddagger ss^*\ddagger-ss^*\in \Sigma^2.
\ee

\end{lemma}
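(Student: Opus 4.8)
The plan is to prove this by induction on $d=\deg(s)$, at each step stripping off the first letter of $s$; the argument must be algebraic, since — as the rest of the paper stresses — merely knowing $\pi(\ddagger ss^*\ddagger)-\pi(ss^*)\ge0$ as operators would not produce a sum-of-squares decomposition. Write $s=l_1l_2\cdots l_d$ with each $l_i$ one of the generators, so that $ss^*=l_1(l_2\cdots l_d\,l_d^*\cdots l_2^*)l_1^*=l_1(s''s''^*)l_1^*$ with $s''=l_2\cdots l_d$ of degree $d-1$. The base cases $d=0,1$ are immediate, since $\ddagger ss^*\ddagger-ss^*$ is then either $0$ or $[\,l_1,l_1^*\,]\in\{0,1\}$, and $0,1\in\Sigma^2$.

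For the inductive step I would first absorb the inner part using the hypothesis. Writing $\ddagger s''s''^*\ddagger-s''s''^*=Q$ with $Q\in\Sigma^2$, and using that conjugation preserves $\Sigma^2$ (if $Q=\sum_i g_i^*g_i$ then $l_1Ql_1^*=\sum_i(g_il_1^*)^*(g_il_1^*)$), we get $l_1\ddagger s''s''^*\ddagger l_1^*-ss^*=l_1Ql_1^*\in\Sigma^2$. It therefore suffices to show that the residual $D:=\ddagger ss^*\ddagger-l_1\ddagger s''s''^*\ddagger l_1^*$ lies in $\Sigma^2$. Since like generators commute, $\ddagger s''s''^*\ddagger=a^{\bar m}(a^*)^{\bar m}$, where $m_k$ is the number of occurrences of $a_k$ plus that of $a_k^*$ in $s''$ (so $s''s''^*$ indeed contains $m_k$ copies of $a_k$ and $m_k$ copies of $a_k^*$), while $\ddagger ss^*\ddagger=a^{\bar m+e_j}(a^*)^{\bar m+e_j}$ with $j$ the mode of $l_1$. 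If $l_1=a_j$, the commutation of distinct modes gives $D=0$. If $l_1=a_j^*$, I would isolate mode $j$: writing $a^{\bar m}=a_j^{m_j}B$ with $B=\prod_{k\neq j}a_k^{m_k}$ and using the CCRs between distinct modes, both terms of $D$ factor as a mode-$j$ operator times $BB^*$, so that $D=D_j\,BB^*$ with $D_j:=a_j^{m}(a_j^*)^{m}-a_j^*a_j^{m-1}(a_j^*)^{m-1}a_j$ and $m:=m_j+1\ge1$.

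The core of the argument is then the one-mode claim $D_j\in\Sigma^2$. For this I would use the standard identities $a^m(a^*)^m=\prod_{i=1}^m(N+i)$ and $a^*a^{m-1}(a^*)^{m-1}a=N\prod_{i=0}^{m-2}(N+i)$, with $N=a^*a$, so that $D_j$ is a polynomial in $N$; factoring out $\prod_{i=1}^{m-2}(N+i)$ leaves $(N+m-1)(N+m)-N^2=(2m-1)N+m(m-1)$, so that $D_j=\prod_{i=1}^{m-2}(N+i)\cdot\big((2m-1)N+m(m-1)\big)$ for $m\ge2$ and $D_j=1$ for $m=1$ — in all cases a polynomial in $N$ with nonnegative real coefficients. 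Since every power $N^l=(a^*a)^l$ is a square ($N^{2l}=\big((a^*a)^l\big)^*(a^*a)^l$ and $N^{2l+1}=\big(a(a^*a)^l\big)^*\big(a(a^*a)^l\big)$), and $\Sigma^2$ is a convex cone, $D_j\in\Sigma^2$. Finally, $D_j$ (mode $j$) commutes with $B$ (modes $\neq j$), so if $D_j=\sum_l h_l^*h_l$ then $D=D_jBB^*=\sum_l(B^*h_l)^*(B^*h_l)\in\Sigma^2$, which closes the induction.

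I expect the main obstacle to be choosing the right organizing principle rather than meeting any hard estimate. The tempting ``bubble sort'' proof — reorder the word into anti-normal form one adjacent transposition at a time, replacing each $a_j^*a_j$ by $a_ja_j^*-1$ — does not telescope into a sum of positive terms, because every such swap spawns a lower-degree correction word; the induction on degree is exactly what absorbs these corrections. The only genuinely computational point is the one-mode identity above, and the only step requiring real care is the multi-mode bookkeeping, i.e.\ checking that isolating mode $j$ and re-attaching the factor $BB^*$ preserves membership in $\Sigma^2$ — which works only because the two pieces commute.
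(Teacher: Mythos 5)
Your proof is correct, and its skeleton coincides with the paper's: an induction that strips the leading letter of the word, with a trivial case when that letter is an annihilator, and, when it is a creator, a residual which is exactly the paper's key one-mode fact (Proposition \ref{posi_aaa}: $a^{m}(a^{*})^{m}-a^{*}a^{m-1}(a^{*})^{m-1}a\in\Sigma^2$), here multiplied by the commuting cross-mode factor $BB^{*}$. You deviate in two respects. First, the paper organizes the argument as a single-mode induction on word length (Proposition \ref{cognazo}) followed by a separate induction on the number of modes (the proof of Lemma \ref{posi_anti}), whereas you run one induction on total word length for all modes simultaneously and handle the multi-mode bookkeeping by commuting distinct modes past each other; both are sound, and yours is somewhat more economical. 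Second, and more substantively, your proof of the one-mode core is genuinely different: instead of the paper's CCR recursion through the auxiliary family $a^{l}(a^{k})^{*}a^{k-l}$, you use the number-operator identities $a^{m}(a^{*})^{m}=\prod_{i=1}^{m}(N+i)$ and $a^{*}a^{m-1}(a^{*})^{m-1}a=N\prod_{i=0}^{m-2}(N+i)$, $N=a^{*}a$, so the residual is a polynomial in $N$ with nonnegative coefficients, and each power $N^{l}$ is manifestly a square. This buys an explicit, easily verifiable SOS decomposition (with control on which squares occur), at the mild price of having to justify the identities algebraically; a one-line induction using $af(N)a^{*}=(N+1)f(N+1)$ and $a^{*}f(N)a=Nf(N-1)$ does that and keeps the argument purely within $\W_1$, as it must be (operator positivity alone would not give an SOS). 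Two cosmetic points: in the base case the difference is $0$ or $1$ (for $l_1=a_j^{*}$ it equals $[a_j,a_j^{*}]=1$, not $[l_1,l_1^{*}]$, which is $-1$ there), and it is worth stating that nonnegative multiples are absorbed via $c\,f^{*}f=(\sqrt{c}\,f)^{*}(\sqrt{c}\,f)$, so that $\Sigma^2$ is indeed closed under the nonnegative combinations you take.
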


\begin{lemma}\label{represent}
Let $L$ be a linear functional in ${\cal W}_n$. If $L(h^* h)\geq 0$ for any $h\in {\cal W}_n$ and there exist $c,d>0,k\in \N$ such that for any monomial $s\in \W_n$, the relation 

\begin{equation}
|L(s)|\leq dc^{|s|}\Gamma(\frac{|s|+k+1}{2})
\end{equation}

\noindent holds, then there exists a non normalized quantum state $\rho$ (i.e., a non-negative, trace-class operator) such that $L(h)=\tr(\rho \pi(h))$, for any polynomial $h\in \W_n$.

\end{lemma}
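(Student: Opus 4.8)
The plan is to run a Gelfand--Naimark--Segal (GNS) construction on the positive functional $L$, then to use the growth bound on $|L(s)|$ to show that the resulting $*$-representation of $\W_n$ \emph{integrates} to a unitary representation of the Heisenberg group, and finally to invoke the Stone--von Neumann theorem to identify it with a multiple of the Schr\"odinger representation $\pi$; the cyclic GNS vector then reassembles into the density operator $\rho$. This is the non-commutative analogue, for the Weyl algebra, of the classical passage ``Hamburger moment sequence $+$ Carleman condition $\Rightarrow$ essentially self-adjoint multiplication operator $\Rightarrow$ representing measure''.

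First I would record that $L(h^*h)\ge 0$ for all $h$ already forces $L$ to be hermitian, $L(h^*)=\overline{L(h)}$, and to satisfy Cauchy--Schwarz, $|L(g^*h)|^2\le L(g^*g)L(h^*h)$. Hence $N=\{h\in\W_n:L(h^*h)=0\}$ is a left ideal, the quotient $\mathcal D_0=\W_n/N$ carries the inner product $\langle[g],[h]\rangle=L(g^*h)$, left multiplication gives a $*$-representation $\rho_L$ of $\W_n$ by (in general unbounded, but closable) operators on the dense domain $\mathcal D_0$ of the completion $\mathcal K$, and $\Omega:=[1]$ is a cyclic vector with $\rho_L(h)\Omega=[h]$ and $L(h)=\langle\Omega,\rho_L(h)\,\Omega\rangle$. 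If $L(1)=0$ then $L\equiv 0$ by Cauchy--Schwarz and $\rho=0$ works, so assume $L(1)>0$. Everything now reduces to identifying $\rho_L$. Note that $\mathrm{span}\{x_j,p_j,1\}$ is the Heisenberg Lie algebra ($[x_j,p_k]=i\delta_{jk}1$, $x_j=x_j^*$, $p_j=p_j^*$), and $\rho_L$ restricts to a representation of it by symmetric operators on the invariant domain $\mathcal D_0$, with the central element acting as $\id$.

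The core of the argument is integrability of this Lie-algebra representation. I would introduce the harmonic-oscillator operator $\Delta:=\sum_j\big(\rho_L(x_j)^2+\rho_L(p_j)^2\big)=2\sum_j\rho_L(a_j^*a_j)+n$ and show that every $[h]\in\mathcal D_0$ is an analytic vector for $\Delta$. Since $\|\Delta^m[h]\|^2=L\!\big(h^*\Delta^{2m}h\big)$, one expands $h^*\Delta^{2m}h$ in normal form and applies $|L(s)|\le dc^{|s|}\Gamma\!\big(\tfrac{|s|+k+1}{2}\big)$ term by term; crucially, because $\Delta$ is a polynomial in the \emph{commuting} number operators $\rho_L(a_j^*a_j)$, the normal-form $l_1$-norm of $h^*\Delta^{2m}h$ grows only at a Bell-number--type rate rather than at the generic factorial rate of an arbitrary degree-$4m$ Weyl polynomial, and combining this with $\Gamma(\tfrac{4m+\cdots}{2})\sim(2m)!$ yields an estimate of the form $\|\Delta^m[h]\|\le C_h R_h^{\,m}\,(2m/\log 2m)^m$, so that $\sum_m\|\Delta^m[h]\|\,t^m/m!<\infty$ for every $t>0$. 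Since $\mathcal D_0$ is a dense $\rho_L(\W_n)$-invariant set of analytic vectors for $\Delta$, Nelson's theorem on integrability of Lie-algebra representations applies: the generators $\rho_L(x_j),\rho_L(p_j)$ are essentially self-adjoint and the representation exponentiates to a unitary representation of the Heisenberg group on $\mathcal K$, with central generator acting as $\id$. By the Stone--von Neumann theorem this representation is equivalent to a finite or countable direct sum $\bigoplus_\alpha\mathcal H_\alpha$ of copies of the Schr\"odinger representation, so under the equivalence $\rho_L(h)\cong\bigoplus_\alpha\pi(h)$ on $\mathcal D_0$ and $\Omega\cong(\Omega_\alpha)_\alpha$ with $\sum_\alpha\|\Omega_\alpha\|^2=L(1)$. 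Since $\Omega=[1]$ lies in the domain of $\rho_L(h)$ for every $h\in\W_n$, each $\Omega_\alpha$ lies in the domain of every $\pi(h)$, i.e.\ $\Omega_\alpha\in\mathcal S$, by the very definition of the Schwartz space used in the paper. Setting $\rho:=\sum_\alpha\proj{\Omega_\alpha}$ gives a non-negative trace-class operator with $\tr\rho=L(1)$ and $L(h)=\langle\Omega,\rho_L(h)\Omega\rangle=\sum_\alpha\bra{\Omega_\alpha}\pi(h)\ket{\Omega_\alpha}=\tr\big(\rho\,\pi(h)\big)$ for all $h\in\W_n$, which is the claim.

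The main obstacle is precisely the integrability step. By Nelson's classical counterexample, symmetric operators obeying the canonical commutation relations on a common invariant dense domain need not satisfy the exponentiated (Weyl) relations, so there is no shortcut: the proof genuinely hinges on converting the moment hypothesis $|L(s)|\le dc^{|s|}\Gamma(\tfrac{|s|+k+1}{2})$ into the quantitative analytic-vector bound for $\Delta$, and this in turn requires the non-generic, Bell-type control of the normal-form coefficients of $h^*\Delta^{2m}h$ rather than a crude degree-based multiplication estimate (which would be too lossy to make the analytic-vector series converge). Once integrability is in hand, the Stone--von Neumann theorem and the bookkeeping of the cyclic vector are routine.
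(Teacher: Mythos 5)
Your plan (GNS construction for $L$, analytic vectors for the oscillator element $\Delta=\sum_j(\rho_L(x_j)^2+\rho_L(p_j)^2)$, Nelson integrability, Stone--von Neumann, and reassembly of the cyclic vector into $\rho=\sum_\alpha\proj{\Omega_\alpha}$) is a genuinely different route from the paper's, which never builds a representation at all: there one defines the characteristic function $\chi(\bar{\xi})=L(e^{i\bar{\xi}\sigma\bar{R}})$ as a power series whose absolute convergence is exactly what the $\Gamma$-bound buys, recovers $\rho$ by the inverse Weyl transform, matches polynomial moments by differentiating the Weyl relations, and obtains positivity and the trace-class property from the quantum Bochner--Khinchin theorem, positive definiteness being inherited from $L(h_mh_m^*)\geq 0$ in a limit. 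Your route can in principle be made to work, but the step you yourself single out as the crux --- converting the moment bound into the analytic-vector estimate for $\Delta$ --- is not carried out correctly as written.

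Concretely: estimating $\|\Delta^m[h]\|^2=L(h^*\Delta^{2m}h)$ by (normal-form $l_1$-norm) times (the moment bound at the top degree $4m+2\deg h$) gives roughly $B_{2m}\cdot C^m(2m)!$, so even after taking the square root one finds $\|\Delta^m[h]\|/m!\gtrsim \left(\mathrm{const}\cdot m/\log m\right)^m$, which diverges for every $t>0$: Bell-rate control of the $l_1$-norm alone is \emph{not} enough. The bound you actually wrote, $\|\Delta^m[h]\|\le C_hR_h^m(2m/\log 2m)^m$, silently drops the $\sqrt{(2m)!}$ factor coming from the $\Gamma$-estimate, and its advertised consequence --- convergence of $\sum_m\|\Delta^m[h]\|t^m/m!$ for \emph{every} $t>0$ --- is false under the hypotheses: the Gibbs-type functional $L(h)=\tr(e^{-\beta N}\pi(h))$ (with $N$ the number operator) satisfies the lemma's moment bound, yet $\|\Delta^m[1]\|\sim\sqrt{(2m)!}\,(2/\beta)^m\sim(4/\beta)^mm!$, so the series converges only for small $t$. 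The estimate can be repaired, but only by pairing each normal-form coefficient with the moment bound at \emph{its own} degree rather than the maximal one: using $(a_j^*a_j)^l=\sum_kS(l,k)(a_j^*)^ka_j^k$ and the hypothesis, one gets Fubini-number--type sums $\sum_kS(l,k)\,k!\,c^{2k}\le C^l\,l!$, whence $\|\Delta^m[h]\|\le C_hR^m\sqrt{(2m)!}\le C_h(2R)^mm!$; this yields analytic (not entire) vectors with some finite radius, which is all Nelson's theorem requires. With that correction, the remaining steps (Nelson, Stone--von Neumann with nontrivial central character, $\Omega_\alpha\in\mathcal{S}$, and $L(h)=\tr(\rho\,\pi(h))$) go through modulo routine domain bookkeeping identifying $\rho_L$ on $\W_n$ with the derived action of the integrated representation on the GNS domain; but as stated, the key quantitative step fails.
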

Let us now proceed with the proof of Lemma~\ref{SOS}. 
Following Lasserre et al. \cite{Lasserre}, let $\W_n(r)$ be the set of elements of ${\cal W}_n$ with degree less or equal than $r$, and consider the semidefinite program:

\begin{equation}
\epsilon_r^\star=\mbox{min }_L\{ L(p)|L:\W_n(2r)\to\R, L \mbox{ linear }, L(g^r_c)\leq
1, L(h^* h)\geq 0,\forall h\in \W_n(r)\}.
\label{primal}
\end{equation}

\noindent Noting that $L=0$ is an admissible linear functional, we have that the problem has feasible points and that $\epsilon_r^*\leq 0$.  Condition $L(g^r_c)\leq 1$, together with Lemma \ref{posi_anti}, implies that the diagonal entries of all feasible moment matrices $M_r(y)$ are upper bounded, and so the absolute values of the rest of the entries, due to positive semidefiniteness.
From these two observations, it follows that our problem admits a solution, i.e., $\epsilon_r^\star\not=-\infty$ is attainable for a
feasible choice of $L$. 

The dual of (\ref{primal}) is

\begin{equation}
\mbox{max }_\epsilon\{-\epsilon: \epsilon\geq 0, p+\epsilon g^r_c\in \Sigma^2\}.
\end{equation}

\noindent That this problem has solutions for any $p=p^*$ follows from the fact that, for all $\mu\in \C$ and any pair of monomials $s,t\in \W_n$,

\be
\mu s^*t+\mu^*t^*s+|\mu|\{s^*s+t^*t\}\in \Sigma^2.
\ee

\noindent And thus, invoking Lemma \ref{posi_anti}, 

\be
\mu s^*t+\mu^*t^*s+|\mu|\{\ddagger s^*s\ddagger +\ddagger t^*t\ddagger\}\in \Sigma^2.
\ee

\noindent By increasing the value of $\epsilon$, at some point we will therefore have that $p+\epsilon g^r_c\in\Sigma^2$.

Moreover, in this particular case, there is no duality gap, i.e., the solutions of
both the primal and dual problems coincide. Again, this can be established by invoking the quantum state (\ref{state}): choosing $\sigma>0$ such that $\tr\{\Omega \pi(g^r_c)\}< 1/\sigma$, it follows that $L(h)\equiv\sigma\tr(\Omega\pi(h))$ is a strictly feasible point of (\ref{primal}) and, thus, the solutions of the dual and primal problems are the same \cite{sdp}.

This, together with the fact that $g^s_c-g^r_c$ is a sum of squares for $s\geq r$, implies that, for all $\epsilon\geq-\epsilon_r^\star\geq 0$, the polynomial

\begin{equation}
p+\epsilon g^s_c\mbox{ }(s\geq r)
\end{equation}

\noindent is also a sum of squares.

The sequence $(\epsilon^\star_r)_r$ is, therefore, and increasing one. We will
next proof that $\lim_{r\to\infty} \epsilon^\star_r=0$, and so that the $\epsilon>0$
appearing in the formulation of Lemma~\ref{SOS} can be taken arbitrarily
small.

Consider the sequence $L^\star_r$ of functionals that attain the solutions $\epsilon^\star_r$
of the problem, and denote by $M^\star_r$ their corresponding moment matrices (the entries
$(M^\star_r)_{(\bar{s},\bar{t}),(\bar{u},\bar{v})}$ where either $\|\bar{s}+\bar{t}\|_1> r$ or $\|\bar{u}+\bar{v}\|_1> r$ are assumed to be
completed with zeros). By Lemma \ref{posi_anti}, we have that 

\be
(M^\star_r)_{(\bar{s},\bar{t}),(\bar{s},\bar{t})}\leq (M^\star_r)_{(\bar{s}+\bar{t},0),(\bar{s}+\bar{t},0)}\leq \frac{(n+\|\bar{s}+\bar{t}\|_1-1)!}{(n-1)!}c^{\|\bar{s}+\bar{t}\|_1}=:d(\bar{s},\bar{t}).
\ee

\noindent Now, perform the transformation 

\be
(M^\star_r)_{(\bar{s},\bar{t}),(\bar{u},\bar{v})}\to (N^\star_r)_{(\bar{s},\bar{t}),(\bar{u},\bar{v})}\equiv d(\bar{s},\bar{t})^{1/2}(M^\star_r)_{(\bar{s},\bar{t}),(\bar{u},\bar{v})}d(\bar{u},\bar{v})^{1/2}. 
\ee

\noindent $N^\star_r$ is thus positive semidefinite and its diagonals are upper bounded by 1 for all $r$; it follows that all the entries of the matrices $N^\star_r$ are in the interval $[1, -1]$. By the Banach-Alaoglu theorem, the sequence $N^\star_r$ admits a subsequence $\{r_i\}$ that converges in the weak-$\ast$ topology to a limit $N^\star_{r_i}\to\hat{N}$ when $i\to\infty$ \cite{reedsimon}. Undoing the previous change of coordinates, we are left with an infinite sized matrix $\hat{M}$ that defines a linear functional $\hat{L}(p)\equiv\sum_{\bar{s},\bar{t}}p_{(\bar{s},\bar{t})}\hat{M}_{(0,\bar{s}),(0,\bar{t})}$ on the Weyl algebra. 

This functional satisfies $\hat{L}(h^*h)\geq 0$, for any polynomial $h$. Moreover, for any sequence $s=s_1s_2$, with $|s_1|=\left\lceil\frac{|s|}{2}\right\rceil$, $|s_2|=\left\lfloor \frac{|s|}{2}\right\rfloor$, 

\begin{eqnarray}
|\hat{L}(s)|&&\leq \left(\hat{L}(s_1s_1^*)\hat{L}(s_2^*s_2)\right)^{1/2}\leq\left(\frac{(n+|s_1|-1)!(n+|s_2|-1)!}{(n-1)!(n-1)!}\right)^{1/2}\sqrt{c}^{|s_1|+|s_2|}\nonumber\\
&&\leq\frac{\sqrt{c}^{|s|}}{(n-1)!}\Gamma\left(\frac{2n+2+|s|}{2}\right).
\end{eqnarray}

\noindent By Lemma \ref{represent}, this last condition implies that there exists a non normalized quantum state $\rho\in S_1(\H)$ in the Schr\"{o}dinger representation such that $\hat{L}(h)=\tr(\rho \pi(h))$, for all $h$.

Now,

\begin{equation}
\lim_{r\to\infty}\epsilon^\star_{r}=\hat{L}(p)=\tr(\rho
\pi(p))\geq 0,
\end{equation}

\noindent where the last inequality follows from the non-negativity assumption
on $p$. On the other
hand, $\epsilon^\star_{r}\leq 0$ $\forall r$, and, therefore, we have that $\lim_{r\to\infty}\epsilon^\star_{r}=0$.\begin{flushright}$\square$\end{flushright}

\noindent\emph{Proof of Lemma \ref{minimum}.} Given a vector $\bar{m}\in \N^n$, we will denote by $\ket{\bar{m}}$ the number state $\ket{m_1}\otimes\ket{m_2}\otimes...\otimes\ket{m_n}$. Now, if $\pi(p)\geq 0$, then $\lambda_{\inf}(p)$ exists and can be written as
the limit of a sequence of the form $\bra{\phi_i}\pi(p)\ket{\phi_i}$, where $\{\phi_i\}$
are normalized quantum states. Such states can be, in turn, approximated with
arbitrary precision by finite linear combinations of number states. Choose,
then, a number $M\in \N$ such that the normalized state $\ket{\Phi}\equiv\sum_{\|\bar{m}\|_{\infty}\leq M} d_{\bar{m}} \ket{\bar{m}}$ satisfies $\lambda_{\inf}(p)\leq\bra{\Phi}\pi(p)\ket{\Phi}\leq\lambda_{\inf}(p)+\delta$. 

It can be verified that, for any pair of number states $\ket{\bar{m}},\ket{\bar{m}'}$, with $\|\bar{m}\|_{\infty},\|\bar{m}'\|_{\infty}\leq M$, and any monomial $s$ of the annihilation
and creation operators, the inequality $|\bra{\bar{m}}ss^*\ket{\bar{m}'}|\leq \frac{(|s|+M)!}{M!}\delta_{\bar{m},\bar{m}'}$ holds.
Therefore, 

\begin{equation}
\bra{\Phi}ss^*\ket{\Phi}\leq \frac{(|s|+M)!}{M!}\cdot\sum_{\bar{m}} |d_{\bar{m}}|^2=\frac{(|s|+M)!}{M!}.
\end{equation}

\noindent Finally, choose $c>1$. It follows that 

\be
\bra{\Phi}g^r_c\ket{\Phi}\leq \sum_{l=0}^\infty\frac{\sharp\{\bar{t}:|\bar{t}|\leq l\}}{\frac{(n+l-1)!}{(n-1)!l!}}\frac{(l+M)!}{M!l!c^l}=
\sum_{l=0}^\infty \frac{(l+M)!}{M!l!c^l}=\left(\frac{c}{c-1}\right)^{M+1},
\ee

\noindent where in order to identify the second and third expressions we made use of Proposition \ref{combina} in Appendix \ref{distancia}. Equaling to $q$ the last result, we arrive at the promised Lemma.\begin{flushright}$\square$\end{flushright}

Note that Lemma \ref{represent} alone provides an alternative explanation for the observed convergence to the optimal solution, this time from the point of view of the dual problem (\ref{moment_formal}). The reason why $\mu^k$ does not necessarily converge in theory to $\lambda_{\inf}(p)$ for the exact problem is because the values $L(s)$ in each program grow faster than $dc^{|s|}\Gamma(\frac{|s|+l}{2})$. If, however, due to finite numerical precision our solvers limit the magnitude of such momenta, low order relaxations $\mu^k$ should provide a better approximation to $\lambda_{\inf}$\footnote{This does not apply to high order relaxations corresponding to $k\gg 1$, as a constraint of the form $L(a_j^k(a_j^k)^*)<K$ becomes unimplementable as soon as $K>k!$.}. 

Increasing the numerical precision of our programs should thus have a two-fold effect: on one hand, it should allow the computer to distinguish between $p$ and its perturbation $\tilde{p}$. On the other hand, it should extend the moment matrix search space to include matrices with entries of very different magnitude. A high precision numerical computation should therefore make the curves in Figure \ref{all_rel} collapse to the same line.

We used the semidefinite programming solver SDPA-GMP \cite{nakata2,SDPA} to compute SDP approximations of $E_{-1},E_1$ with a precision of $600$ digits. Figures \ref{fig_minus1} and \ref{fig_plus1} show the outputs $\lambda^k-\lambda^2$ of both problems as a function of \emph{lambdaStar}, an internal parameter of SDPA-GMP that constrains the magnitudes of the entries of the moment matrix\footnote{More concretely, lambdaStar is such that $(\mbox{lambdaStar})\cdot\id-M_k(y)\geq 0$.} \cite{SDPA}. Notice that, in agreement with the above interpretation, the difference between $\lambda^k$ and $\lambda^2=\lambda^\star$ tends to zero as the constraints on the moment matrix disappear (right end of Figures \ref{fig_minus1} and \ref{fig_plus1}). Conversely, the solutions of higher order relaxations start to differ from $\lambda^2$ and become closer to the actual solution of the problem as we restrict the magnitude of the entries of the moment matrix (left end of Figures \ref{fig_minus1} and \ref{fig_plus1}).
\begin{figure}
  \centering
  \includegraphics[width=12 cm]{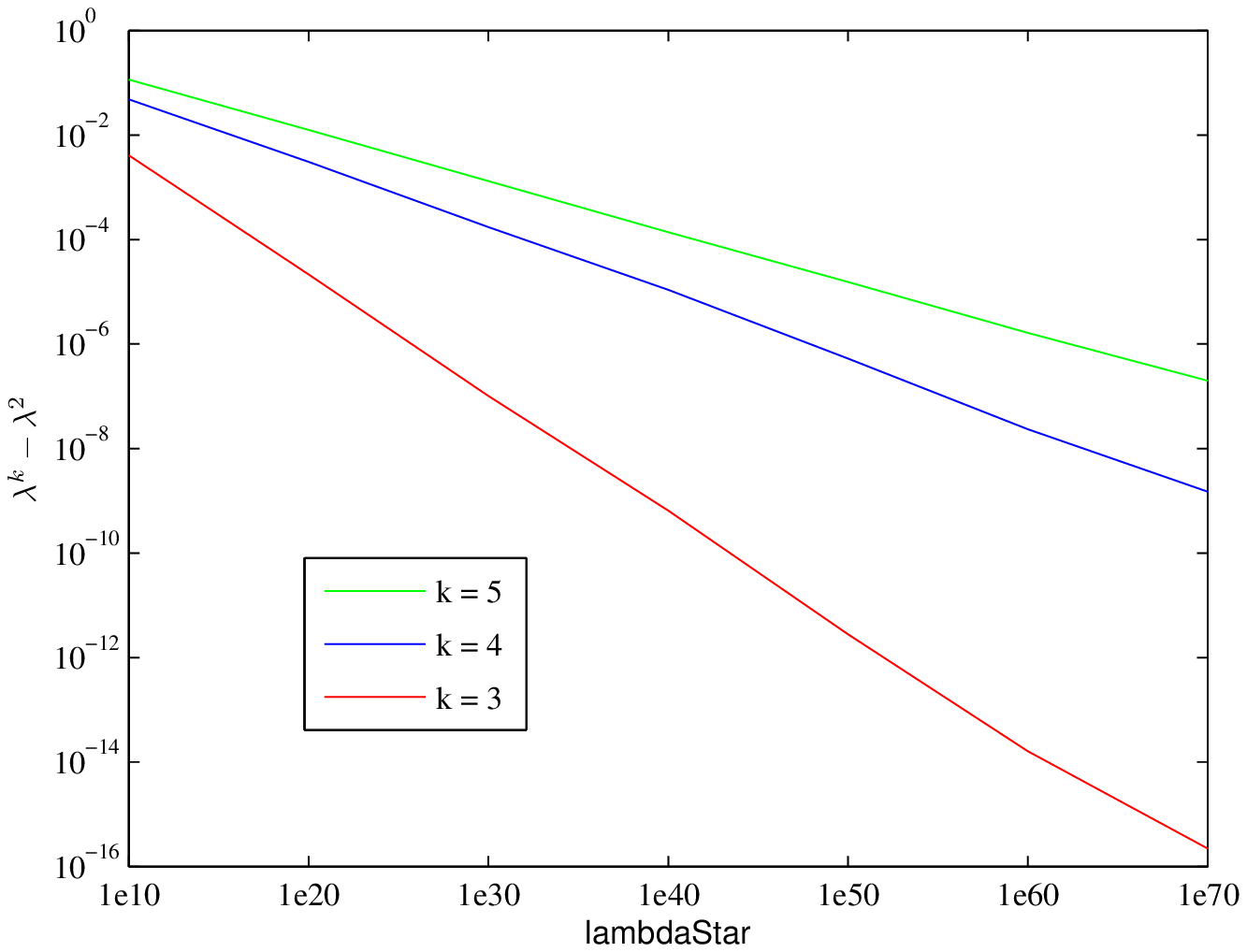}
  \caption{Plot of $\lambda^k-\lambda^2$ as a function of the parameter lambdaStar in the case $m=-1$.}
  \label{fig_minus1}
\end{figure}

\begin{figure}
  \centering
  \includegraphics[width=12 cm]{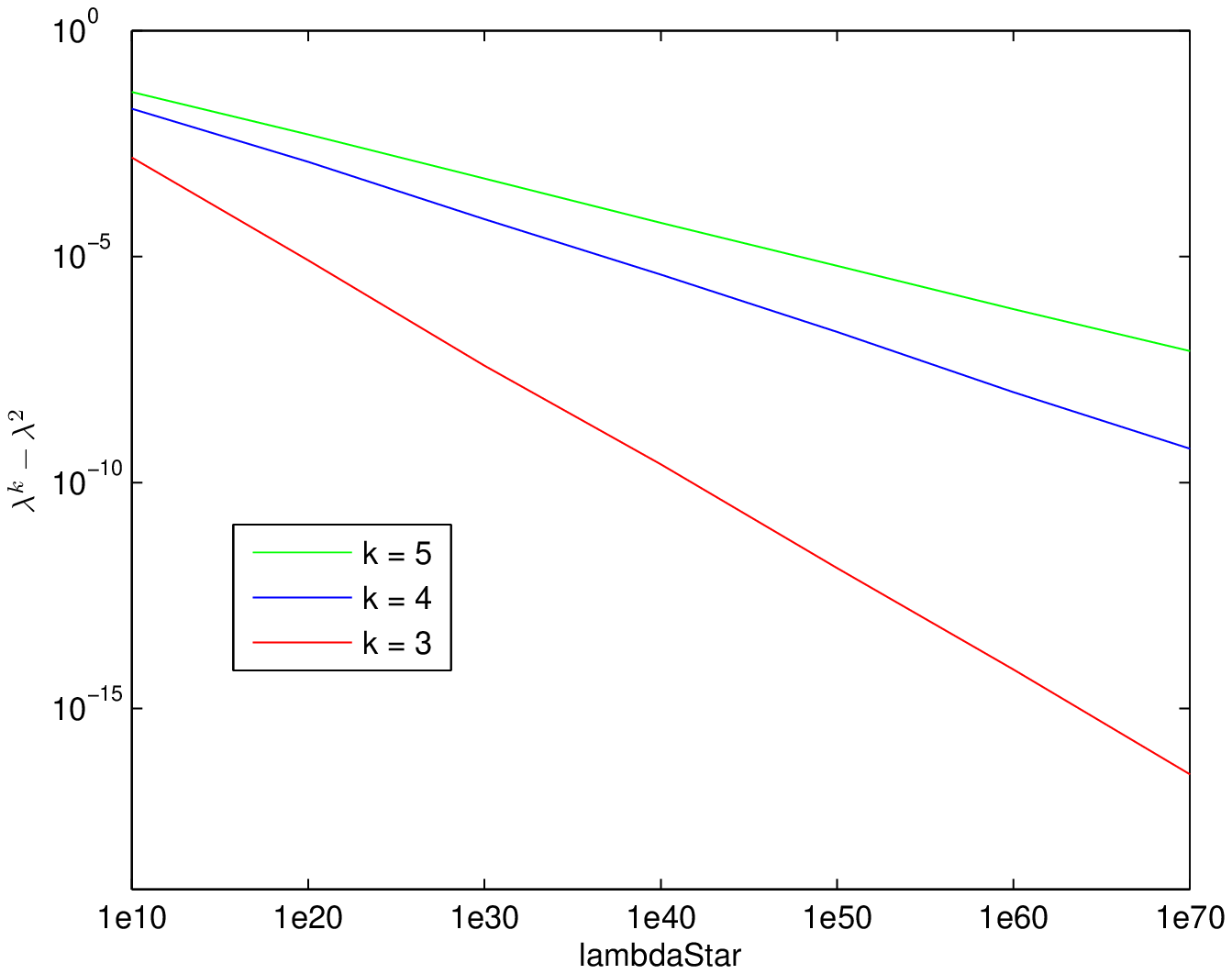}
  \caption{Plot of $\lambda^k-\lambda^2$ as a function of the parameter lambdaStar in the case $m=1$.}
  \label{fig_plus1}
\end{figure}

\section{Conclusion}
\label{conclusion}
We have identified a paradoxical behaviour in the application to bosonic systems of the SDP methods widely used in quantum chemistry energy calculations and quantum information. Namely, we have pointed out that numerical implementations of the method seems to converge despite a simple theoretical argument showing that the first SDP relaxation should already provide the best lower-bound to the problem at hand. This phenomenon is similar to an analogous  behavior observed in commutative polynomial optimization \cite{henrion,waki} and we suggested that the paradox arises from rounding errors introduced in the numerical comutation. We provided a theoretical basis for this assumption by proving that for any bosonic hamiltonian to be minimized there exists a perturbation of it whose ground state energy can accurately be approximated by the SDP method. Furthermore, we showed that the effect disappears as soon as we increase the computer precision. %One of the conclusions of this research is that the lower bounds on the ground-state energy of bosonic systems provided by the SDP hierarchy cease to be rigorous as soon as they start giving good estimates. 

Our results suggest that the above problem could be avoided by constraining the values of the diagonal elements of the moment matrices in each program. Thanks to such constraints, computer implementations of the primal problem could return reliable solutions without having to resort to extremely high precision numerical calculations. This approach will be explored in a forthcoming article.   

It is worth noting that Cimpri\v{c} \cite{cimpric} proposed to use Schm\"{u}dgen's positivstellensatz for Weyl algebras \cite{schmudgen} to introduce a different SDP hierarchy than the one presented here in order to find rigorous lower bounds on the minimum value of arbitrary Weyl polynomials.  The application of this method, however, requires high precision SDP solvers.

Let us conclude with a problem for the Noncommutative Real Algebraic Geometry community. We have shown that the set of SOS polynomials is dense in the set of positive semidefinite elements of the Weyl algebra, i.e., we can approximate any polynomial which is positive semidefinite in the Schr\"{o}dinger representation by a SOS. It would be interesting to know if this kind of results also hold in other algebras important for quantum chemistry. For instance, if such an `approximation property' were also true in algebras containing coulombian elements of the type $1/|\bar{x}_i-\bar{x}_j|$, then we would be able to estimate electronic molecular energies without the need of introducing orbital basis sets.

\section{Acknowledgements}
M. N. has been supported by the Templeton Foundation. We acknowledge Monique Laurent and Frank Vallentin for interesting discussions. S.P. acknowledges financial support from the Brussels-Capital Region through a BB2B grant. MBP has been supported by the Alexander von Humboldt Foundation.

\section*{Appendix}
\begin{appendix}
\section{Bounding the norm of $g^r_c$}
\label{distancia}

The goal of this section is to prove the next lemma, from which Lemma~\ref{l1p} is a direct corollary.

\begin{lemma}
Let $c>2$. Then, 

\begin{equation}
l_1(g^r_c)\leq \frac{c}{c-2}.
\end{equation}
\label{boundito}
\end{lemma}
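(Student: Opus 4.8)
The plan is to put $g^r_c$ into normal form block by block and then bound the resulting $l_1$-norm by a geometric series. First I would normal-order a single anti-normal block. For one mode, repeated use of $[a,a^*]=1$ gives $a(a^*)^{t}=(a^*)^{t}a+t(a^*)^{t-1}$, and an easy induction on $t$ (equivalently, from $a^{t}(a^*)^{t}=(N+1)(N+2)\cdots(N+t)$ with $N=a^*a$) yields
\begin{equation}
a_i^{t}(a_i^*)^{t}=\sum_{k=0}^{t}\binom{t}{k}^{2}k!\,(a_i^*)^{t-k}a_i^{t-k}.
\end{equation}
Since distinct modes commute, $a^{\bar t}(a^{\bar t})^*=\prod_{i=1}^{n}a_i^{t_i}(a_i^*)^{t_i}$, and multiplying out these $n$ normal-ordered factors produces a normal-form expression whose monomials $(a^*)^{\bar t-\bar k}a^{\bar t-\bar k}$ are indexed by tuples $\bar k$, distinct tuples giving distinct monomials; all coefficients $\prod_i\binom{t_i}{k_i}^{2}k_i!$ are nonnegative, so there is no cancellation and
\begin{equation}
l_1\bigl(a^{\bar t}(a^{\bar t})^*\bigr)=\prod_{i=1}^{n}L(t_i),\qquad L(t):=\sum_{k=0}^{t}\binom{t}{k}^{2}k!.
\end{equation}

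Next I would record two crude estimates. Since $\binom{t}{k}k!=t!/(t-k)!\le t!$, we get $L(t)\le t!\sum_{k=0}^{t}\binom{t}{k}=2^{t}t!$. And for fixed $l$, the multinomial coefficient $l!/\prod_i t_i!$ is $\ge 1$ whenever $\sum_i t_i=l$, so $\prod_i t_i!\le l!$; combining this with the stars-and-bars count $\sharp\{\bar t\in\N^{n}:\|\bar t\|_1=l\}=\binom{n+l-1}{l}$ (Proposition \ref{combina}) gives
\begin{equation}
\sum_{\|\bar t\|_1=l}\ \prod_{i=1}^{n}L(t_i)\ \le\ 2^{l}\sum_{\|\bar t\|_1=l}\ \prod_{i=1}^{n}t_i!\ \le\ 2^{l}\,l!\,\binom{n+l-1}{l}.
\end{equation}

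Finally I would assemble the bound. The coefficient of $a^{\bar t}(a^{\bar t})^*$ in $g^r_c$, with $l=\|\bar t\|_1$, equals $\frac{(n-1)!}{c^{l}(n+l-1)!}=\frac{1}{c^{l}\,l!\,\binom{n+l-1}{l}}$ because $(n+l-1)!/(n-1)!=l!\binom{n+l-1}{l}$. Using the triangle inequality for $l_1$ together with the two displays above,
\begin{equation}
l_1(g^r_c)\ \le\ \sum_{l=0}^{r}\frac{1}{c^{l}\,l!\,\binom{n+l-1}{l}}\cdot 2^{l}\,l!\,\binom{n+l-1}{l}\ =\ \sum_{l=0}^{r}\Bigl(\frac{2}{c}\Bigr)^{l}\ \le\ \frac{1}{1-2/c}=\frac{c}{c-2},
\end{equation}
which uses $c>2$. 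The only place I expect any real care is the first step: proving the single-mode normal-ordering identity and verifying that the coefficients stay nonnegative, so that the $l_1$-norm of each block is literally the sum of its coefficients, $\prod_i L(t_i)$. Everything after that is a chain of deliberately wasteful inequalities chosen precisely so that the combinatorial factors $l!$ and $\binom{n+l-1}{l}$ cancel, leaving the geometric series $\sum_l (2/c)^l$. Lemma \ref{l1p} is then immediate, since $l_1(p-\tilde p^r)=l_1(\epsilon\, g^r_c)=\epsilon\, l_1(g^r_c)\le \epsilon\cdot\frac{c}{c-2}$.
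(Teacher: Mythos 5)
Your proof is correct and follows essentially the same route as the paper: normal-ordering each anti-normal block (your identity $a^{t}(a^{*})^{t}=\sum_{k}\binom{t}{k}^{2}k!\,(a^{*})^{t-k}a^{t-k}$ is exactly the paper's Proposition~\ref{cachonda}), bounding its $l_1$-norm by $2^{l}l!$, using the multinomial/superadditivity observation $\prod_i t_i!\leq l!$ and the stars-and-bars count of Proposition~\ref{combina} so that the factors $l!\binom{n+l-1}{l}$ cancel against the coefficients of $g^r_c$, and summing the geometric series $\sum_l(2/c)^l\leq c/(c-2)$. The only cosmetic difference is that you derive the single-mode identity by induction (or via $(N+1)\cdots(N+t)$) rather than by the paper's coherent-state argument.
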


\noindent The proof of this lemma, we will rely on the next two propositions.

\begin{prop}
\label{cachonda}
\be
a^k(a^k)^*=\sum_{m=0}^k\frac{k!^2}{m!^2(k-m)!}(a^m)^*a^m.
\ee
\end{prop}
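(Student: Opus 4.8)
The plan is to reduce the claimed normal-ordering identity to a purely scalar identity and then recognize that scalar identity as Vandermonde's convolution. Since $(a^k)^*=(a^*)^k$, the assertion is that $a^k(a^*)^k$ and $\sum_{m=0}^k\frac{k!^2}{m!^2(k-m)!}(a^*)^m a^m$ coincide \emph{as elements of} $\W_1$.

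First I would use that the Schr\"odinger representation $\pi$ is faithful: the coherent-state argument already used above to prove uniqueness of the normal form shows that $\pi(p)=0$ forces all normal-form coefficients of $p$, hence $p$ itself, to vanish. So it suffices to verify the operator identity $\pi\!\big(a^k(a^*)^k\big)=\sum_{m=0}^k\frac{k!^2}{m!^2(k-m)!}\,\pi\!\big((a^*)^m a^m\big)$; and since every operator involved is diagonal in the number basis $\{\ket s\}_{s\in\N}$, it is enough to match eigenvalues. From $\tilde a^{\,m}\ket s=\sqrt{s!/(s-m)!}\,\ket{s-m}$ (equal to $0$ for $s<m$) and $(\tilde a^*)^m\ket s=\sqrt{(s+m)!/s!}\,\ket{s+m}$ one gets $\tilde a^{\,k}(\tilde a^*)^k\ket s=\tfrac{(s+k)!}{s!}\ket s$ and $(\tilde a^*)^m\tilde a^{\,m}\ket s=\tfrac{s!}{(s-m)!}\ket s$, with $\tfrac{s!}{(s-m)!}$ read as the falling factorial $s(s-1)\cdots(s-m+1)$, which vanishes for $0\le s<m$ in accordance with the operator action. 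The identity is thus equivalent to
\[
\frac{(s+k)!}{s!}=\sum_{m=0}^k\frac{k!^2}{m!^2(k-m)!}\,\frac{s!}{(s-m)!}\qquad\text{for every }s\in\N .
\]

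To finish I would rewrite $\frac{(s+k)!}{s!}=k!\binom{s+k}{k}$ and, for each $m$, $\frac{k!^2}{m!^2(k-m)!}\cdot\frac{s!}{(s-m)!}=k!\binom km\binom sm$ (valid also for $s<m$, both sides being $0$). The displayed relation then reads $k!\binom{s+k}{k}=k!\sum_{m=0}^k\binom km\binom sm$, i.e. $k!$ times the Vandermonde identity $\binom{s+k}{k}=\sum_{m}\binom{k}{k-m}\binom sm$, which holds for all $s\in\N$. This proves the proposition.

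Alternatively, the identity can be obtained entirely within $\W_1$ by induction on $k$: from $a^{k+1}(a^*)^{k+1}=a\,[a^k(a^*)^k]\,a^*$, inserting the inductive hypothesis and normal-ordering via $[a,(a^*)^m]=m(a^*)^{m-1}$ and $[a^m,a^*]=m\,a^{m-1}$, one gets $a(a^*)^m a^m a^*=(a^*)^{m+1}a^{m+1}+(2m+1)(a^*)^m a^m+m^2(a^*)^{m-1}a^{m-1}$, hence the recurrence $c^{(k+1)}_j=c^{(k)}_{j-1}+(2j+1)c^{(k)}_j+(j+1)^2c^{(k)}_{j+1}$ for the coefficients, which $c^{(k)}_j=k!^2/(j!^2(k-j)!)$ satisfies after a short simplification (the bracket collapses to $(k+1)^2/(k-j+1)!$). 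The only mildly delicate points are the bookkeeping of boundary terms ($j=0,k,k+1$, with out-of-range coefficients set to $0$) and, in the representation route, checking that the falling-factorial reading of $s!/(s-m)!$ matches the operator action when $s<m$; I do not expect either to be a genuine obstacle, so the ``hard'' part here is essentially just invoking the right classical identity.
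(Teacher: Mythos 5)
Your argument is correct, but it takes a genuinely different route from the paper. The paper also works in the Schr\"odinger representation, but via coherent states: it writes $a^k(a^k)^*$ in normal form with unknown coefficients $s_{l,m}$, computes $\bra{\alpha}\pi(a^k(a^k)^*)\ket{\alpha}=e^{-|\alpha|^2}\sum_j\frac{(j+k)!}{j!^2}|\alpha|^{2j}=:g(|\alpha|^2)$, and then extracts $s_{l,m}=\delta_{l,m}\frac{1}{m!}\,g^{(m)}(0)$ by an inductive formula for the derivatives of $g$. You instead diagonalize both sides in the number basis and reduce the claim to the scalar identity $\frac{(s+k)!}{s!}=\sum_m k!\binom{k}{m}\binom{s}{m}\cdot\frac{1}{1}$, i.e.\ Vandermonde's convolution; your second, purely algebraic induction (normal-ordering $a\,(a^*)^m a^m a^*$ and checking the three-term recurrence for $c^{(k)}_j$) is also correct and has the advantage of staying entirely inside $\W_1$, so no representation-theoretic input is needed at all. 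What your first route buys is that the combinatorial content becomes a named classical identity rather than a derivative induction on a generating function; what the paper's route buys is that the coefficients are produced rather than verified, and it reuses the coherent-state machinery already set up. One small point on your first route: the faithfulness statement you invoke is proved in the paper via coherent-state expectations, whereas you verify equality only on number states; you should add the one-line bridge that the number-basis matrix elements of a normal-form polynomial determine its coefficients (an easy induction on the annihilation degree), or simply rely on your algebraic induction, which avoids the issue. The falling-factorial convention $s!/(s-m)!=0$ for $s<m$ is handled correctly.
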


\begin{proof}
Clearly, $a^k(a^k)^*=\sum_{l,m}s_{l,m}(a^*)^la^m$. Evaluating the mean value of the Schr\"{o}dinger representation of both polynomials with respect to an arbitrary coherent state $\ket{\alpha}$, we have
that

\begin{equation}
\sum_{l,m}s_{l,m}(\alpha^*)^l\alpha^m=e^{-|\alpha|^2}\sum_{j=0}^\infty\frac{(j+k)!}{j!^2}|\alpha|^{2j}=:g(|\alpha|^2).
\end{equation}

\noindent It follows that 

\be
s_{l,m}=\delta_{l,m}\frac{1}{m!}\left.\frac{d^mg(x)}{dx^m}\right|_{x=0}.
\ee

\noindent Now, it can be proven, by induction, that

\be
\frac{d^mg(x)}{dx^m}=\frac{k!}{(k-m)!}e^{-x}\sum_{j=0}^\infty\frac{(j+k)!}{j!(j+m)!}x^j.
\ee
if $m\leq k$, while $\frac{d^mg(x)}{dx^m}=0$ if $m>k$.
\noindent The statement of the proposition follows from these two relations.

\end{proof}

\begin{prop}
\label{combina}
Let $\sharp\{\bar{t}\in \N^n:\|\bar{t}\|_1=k\}$ denote the number of elements of $\N^n$ satisfying $\|\bar{t}\|_1=k$. Then,
\be
\sharp\{\bar{t}\in \N^n:\|\bar{t}\|_1=k\}=\frac{(n+k-1)!}{(n-1)!k!}.
\ee
\end{prop}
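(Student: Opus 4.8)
The plan is to prove this classical "stars and bars" count. Write $S=\{\bar t\in\N^n:\|\bar t\|_1=k\}$, the set of ordered $n$-tuples of non-negative integers summing to $k$. I would exhibit an explicit bijection between $S$ and the set of binary strings of length $n+k-1$ having exactly $k$ symbols of one kind ("stars") and $n-1$ of the other kind ("bars"): to $\bar t=(t_1,\dots,t_n)$ associate the string consisting of $t_1$ stars, a bar, $t_2$ stars, a bar, $\dots$, a bar, $t_n$ stars. This map is well-defined, since such a string has $\sum_i t_i=k$ stars and $n-1$ bars; it is injective; and it is surjective because any such string, decoded by counting the stars before the first bar, between consecutive bars, and after the last bar, yields a tuple in $S$. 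Hence $|S|$ equals the number of such strings, namely $\binom{n+k-1}{n-1}=\binom{n+k-1}{k}=\frac{(n+k-1)!}{(n-1)!\,k!}$.

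An alternative that avoids any picture is induction on $n$. The base case $n=1$ is immediate: the only tuple is $(k)$ and $\frac{(1+k-1)!}{0!\,k!}=1$. For the step, split $S$ by the value $t_n=j\in\{0,1,\dots,k\}$; the remaining coordinates then form a tuple in $\N^{n-1}$ summing to $k-j$, of which there are $\frac{(n+k-j-2)!}{(n-2)!\,(k-j)!}=\binom{n+k-j-2}{n-2}$ by the inductive hypothesis. Summing over $j$ and reindexing with $i=k-j$ gives $\sum_{i=0}^{k}\binom{n-2+i}{n-2}$, and the hockey-stick identity $\sum_{i=0}^{k}\binom{n-2+i}{n-2}=\binom{n-1+k}{n-1}$ concludes. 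A third route is via generating functions: $|S|$ is the coefficient of $x^k$ in $(1+x+x^2+\cdots)^n=(1-x)^{-n}$, and the negative binomial expansion of $(1-x)^{-n}$ has $\binom{n+k-1}{k}$ as its $k$-th coefficient.

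There is no genuine obstacle here; this is a textbook identity and the write-up should be only a few lines. The sole point meriting a moment's care is checking that the stars-and-bars correspondence is honestly a bijection (or, in the inductive proof, that the hockey-stick summation is applied with the correct index shift), but neither is subtle. Since the proposition is invoked in the excerpt only to collapse the sums $\sum_l \sharp\{\bar t:|\bar t|\le l\}/\binom{n+l-1}{l}\cdots$ appearing in the proofs of Lemma~\ref{SOS} and Lemma~\ref{minimum}, the bijective statement as formulated is exactly what is needed there.
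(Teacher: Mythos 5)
Your primary argument is exactly the paper's own proof: the stars-and-bars (dots-and-bars) bijection between tuples in $\N^n$ summing to $k$ and strings with $k$ stars and $n-1$ bars, counted as permutations of $n+k-1$ symbols with repetitions, giving $\frac{(n+k-1)!}{(n-1)!\,k!}$. It is correct, and the alternative routes you sketch (induction with the hockey-stick identity, generating functions) are fine but unnecessary additions to what is essentially the same proof.
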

\begin{proof}
Our aim is to compute the number of ways in which $k$ identical balls can be contained in $n$ different boxes. Clearly, any possible configuration can be represented uniquely by a sequence of $k$ dots ``$.$'' and $n-1$ bars ``$|$''. The number of balls $n_1$ in box 1 would then correspond to the number of dots on the left of the first bar; the number of balls $n_j$ inside box $j$, for $2\leq j\leq n-1$, to the number of dots between the $j-1^{th}$ and the $j^{th}$ bars; the number of balls in box $n$, to the number of dots on the right of the $n-1^{th}$ box. For instance, the configuration $n_1=2,n_2=0,n_3=1$ would be represented by ``$..||.$''.

It is elementary that the number of permutations of $n+k-1$ elements, out of which $n-1$ and $k$ are indistinguishable, is equal to $\frac{(n+k-1)!}{(n-1)!k!}$.

\end{proof}

\noindent\emph{Proof of Lemma \ref{boundito}.} 
Proposition \ref{cachonda} implies that 

\be
l_1(a^k(a^k)^*)=\sum_{m=0}^k\frac{k!^2}{m!^2(k-m)!}\leq \sum_{m=0}^k\frac{k!}{m!(k-m)!}k!=2^kk!.
\label{bongo}
\ee

\noindent The last expression is logarithmically superadditive, i.e., $\prod_{i}2^{k_i}k_i!\leq 2^kk!$, for all sets of natural numbers $\{k_1,k_2,...\}$ such that $\sum_ik_i=k$. It follows that the bound given by eq. (\ref{bongo}) also holds for $l_1(a^{\bar{t}}(a^{\bar{t}})^*)$, with $\bar{t}\in \N^n,\|\bar{t}\|_1=k$. We thus have that 

\begin{eqnarray}
l_1(g^r_c)&&\leq \sum_{|\bar{t}|\leq r}\frac{(n-1)!}{c^{\|\bar{t}\|_1}(n+\|\bar{t}\|_1-1)!}l_1(a^{\bar{t}}(a^{\bar{t}})^*)\leq \sum_{k=0}^r\frac{\sharp\{\bar{t}:\|\bar{t}\|_1\leq k\}}{\left(\begin{array}{c}n+k-1\\n-1 \end{array}\right)}\frac{2^k}{c^k}=\nonumber\\
&&=\sum_{k=0}^r\left(\frac{2}{c}\right)^k\leq\sum_{k=0}^\infty\left(\frac{2}{c}\right)^k=\frac{c}{c-2},
\end{eqnarray}

\noindent where in the third inequality we have made use of Proposition \ref{combina}.\begin{flushright}$\square$\end{flushright}

\section{Anti-normal ordered monomials}
\label{antinormal}
The following appendix establishes Lemma \ref{posi_anti}.
\begin{prop}
\label{posi_aaa}
For any $k\in \N$,

\be
a^{k+1}(a^{k+1})^*-a^*a^k(a^k)^*a\in \Sigma^2.
\label{caso_part}
\ee

\end{prop}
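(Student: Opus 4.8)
# Proof Proposal for Proposition \ref{posi_aaa}

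The plan is to verify the identity directly by bringing both sides to normal form and exhibiting an explicit sum-of-squares decomposition for the difference. First I would use Proposition \ref{cachonda} to write $a^{k+1}(a^{k+1})^* = \sum_{m=0}^{k+1}\frac{(k+1)!^2}{m!^2(k+1-m)!}(a^m)^*a^m$. For the second term, I would first compute $a^k(a^k)^* = \sum_{m=0}^{k}\frac{k!^2}{m!^2(k-m)!}(a^m)^*a^m$ by the same proposition, and then conjugate by $a^*$ on the left and $a$ on the right, i.e.\ compute $a^*(a^m)^*a^m a$ in normal form. Note $a^*(a^m)^* = (a^{m+1})^*$ and $a^m a = a^{m+1}$, so $a^*(a^m)^*a^m a = (a^{m+1})^*a^{m+1}$, and hence
\be
a^*a^k(a^k)^*a = \sum_{m=0}^k \frac{k!^2}{m!^2(k-m)!}(a^{m+1})^*a^{m+1} = \sum_{j=1}^{k+1}\frac{k!^2}{(j-1)!^2(k+1-j)!}(a^j)^*a^j.
\ee

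Next I would subtract the two expressions term by term. The $j=0$ term survives only in the first sum with coefficient $(k+1)!^2/(k+1)! = (k+1)!$. For $1\le j\le k+1$, the coefficient of $(a^j)^*a^j$ in the difference is
\be
\frac{(k+1)!^2}{j!^2(k+1-j)!} - \frac{k!^2}{(j-1)!^2(k+1-j)!} = \frac{k!^2}{(j-1)!^2(k+1-j)!}\left(\frac{(k+1)^2}{j^2} - 1\right) = \frac{k!^2}{(j-1)!^2(k+1-j)!}\cdot\frac{(k+1-j)(k+1+j)}{j^2}.
\ee
This is manifestly nonnegative for $0\le j\le k+1$ (it vanishes at $j=k+1$), so the difference $a^{k+1}(a^{k+1})^* - a^*a^k(a^k)^*a = \sum_{j=0}^{k}\lambda_j\,(a^j)^*a^j$ with $\lambda_j\ge 0$. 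Since $(a^j)^*a^j = (a^j)^*(a^j)$ is itself a square and $\lambda_j\ge0$, the sum $\sum_j \lambda_j (a^j)^*a^j = \sum_j (\sqrt{\lambda_j}\,a^j)^*(\sqrt{\lambda_j}\,a^j)$ is a sum of squares, establishing \eqref{caso_part}.

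The main obstacle I anticipate is purely bookkeeping: correctly handling the index shift when conjugating $a^k(a^k)^*$ by $a^*$ and $a$, and making sure Proposition \ref{cachonda} is applied with the right value of the exponent ($k$ versus $k+1$) so that the coefficients line up and the telescoping in $j$ produces the clean factor $(k+1-j)(k+1+j)/j^2$. An alternative route, should the direct computation become unwieldy, would be to evaluate both sides on coherent states $\ket{\alpha}$ — reducing to the scalar inequality between the corresponding generating functions $g_{k+1}(|\alpha|^2)$ and $|\alpha|^2 g_k(|\alpha|^2)$ of Proposition \ref{cachonda}'s proof — and then invoke uniqueness of the normal form together with the fact that a normally-ordered polynomial of the form $\sum_j \lambda_j (a^j)^*a^j$ lies in $\Sigma^2$ iff all $\lambda_j\ge0$; but the direct coefficient computation above seems cleanest and self-contained.
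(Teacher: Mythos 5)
Your proof is correct, but it follows a genuinely different route from the paper's. The paper argues entirely inside the algebra with the CCRs: it rewrites $a^*a^k(a^k)^*a$ using the commutation relations and establishes, by an auxiliary induction, that the mixed monomials $a^l(a^k)^*a^{k-l}$ ($0\le l\le k$) belong to $\Sigma^2$, so that the difference reduces to $(k+1)a^k(a^k)^*+k\,a^{k-1}(a^k)^*a$, a combination of such terms. You instead push everything to normal form via Proposition \ref{cachonda} and check coefficientwise nonnegativity; your index shift and the resulting coefficient $\frac{k!^2}{(j-1)!^2(k+1-j)!}\cdot\frac{(k+1-j)(k+1+j)}{j^2}$ are correct (for $k=1$ this gives $2+3a^*a$, matching a direct check), so you obtain an explicit ``diagonal'' SOS decomposition $\sum_j\bigl(\sqrt{\lambda_j}\,a^j\bigr)^*\bigl(\sqrt{\lambda_j}\,a^j\bigr)$ with closed-form coefficients. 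The price is importing Proposition \ref{cachonda} from Appendix \ref{distancia}; since that proposition is proved independently (via coherent states) and does not rely on Lemma \ref{posi_anti}, there is no circularity, and your computation is arguably more transparent than the paper's double induction, which in exchange stays purely algebraic and needs no normal-form expansion. One caution on your closing side remark: the ``only if'' half of the claim that $\sum_j\lambda_j(a^j)^*a^j\in\Sigma^2$ iff all $\lambda_j\ge0$ is false --- for instance $(a^*a-1)^*(a^*a-1)=1-a^*a+(a^*)^2a^2$ is a square whose normal form has a negative coefficient --- but your main argument uses only the trivial ``if'' direction, so the proof stands as written.
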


\begin{proof}
\noindent Using the CCRs, we have that

\be
a^*a^k(a^k)^*a=-ka^{k-1}(a^k)^*a-(k+1)a^k(a^*)^k+a^{k+1}(a^*)^{k+1}.
\ee

\noindent Now, by induction, we have that, for any $0\leq l\leq k$, $a^l(a^k)^*a^{k-l}\in \Sigma^2$. Indeed, for $l=0$ the result is obvious. Suppose now that the result holds for $l$. Then

\be
a^{l+1}(a^k)^*a^{k-l-1}=a^{l}a(a^k)^*a^{k-l-1}=ka^l(a^*)^{k-1}a^{k-l-1}+a^l(a^k)^*a^{k-l},
\ee

\noindent and the last expression belongs to $\Sigma^2$ by hypothesis.

It follows that

\be
a^{k+1}(a^{k+1})^*-a^*a^k(a^k)^*aa=(k+1)a^k(a^k)^*+ka^{k-1}(a^k)^*a\in\Sigma^2.
\ee

\end{proof}

\begin{prop}
\label{cognazo}
Let $s\in {\cal W}_1$ be an arbitrary monomial of length $k$. Then,
\be
a^k(a^k)^*-ss^*\in \Sigma^2.
\ee
\end{prop}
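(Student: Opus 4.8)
Let $s\in{\cal W}_1$ be an arbitrary monomial of length $k$. Then $a^k(a^k)^*-ss^*\in\Sigma^2$.

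\medskip

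The plan is to induct on the number of ``inversions'' in the monomial $s$, i.e., on how far $s$ is from the canonical monomial $a^k$ in which all annihilation operators precede all creation operators. Write $s=s_1\cdots s_k$ with each $s_j\in\{a,a^*\}$. If $s$ already has the form $a^{j}(a^*)^{k-j}$ with $j<k$, then $ss^*=a^j(a^*)^{k-j}a^{k-j}(a^*)^j$, and one shows directly that $a^k(a^k)^*-ss^*\in\Sigma^2$ by an induction on $k-j$ that is essentially Proposition \ref{posi_aaa} (which treats $j=k-1$, writing the difference as a manifestly non-negative combination of $\Sigma^2$ elements via the CCRs) combined with the observation, already used in the proof of Proposition \ref{posi_aaa}, that $a^l(a^k)^*a^{k-l}\in\Sigma^2$ for all $0\le l\le k$. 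So the base case is the ``already normally arranged on the left'' monomials.

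For the inductive step, suppose $s$ contains a substring ``$a^*a$'', say $s=u\,a^*a\,v$ where $u,v$ are (possibly empty) monomials with $|u|+|v|=k-2$. Applying the CCR $a^*a=aa^*-1$ gives $s=u\,aa^*\,v-u\,v$, hence $ss^*=(u\,aa^*\,v)(u\,aa^*\,v)^*-(uv)(uaa^*v)^*-(uaa^*v)(uv)^*+(uv)(uv)^*$. The monomial $u\,aa^*\,v$ has the same length $k$ but strictly fewer $a^*a$-inversions than $s$ (we have pushed one creation operator to the right past an annihilation operator), and $uv$ has length $k-2$. Using the elementary identity already invoked in the paper,
\[
\mu\, p^*q+\mu^* q^*p+|\mu|(p^*p+q^*q)\in\Sigma^2,
\]
applied with suitable $p,q$ drawn from $\{uv,\,uaa^*v\}$ to absorb the cross terms, I reduce the claim for $s$ to the claim for $u\,aa^*\,v$ (of smaller inversion number) plus control of $ss'^*$ for shorter monomials $s'$; the latter are handled by the inductive hypothesis on $k$ together with the bound $a^k(a^k)^*-a^{k-1}(a^{k-1})^*\in\Sigma^2$ (itself a special case of Proposition \ref{posi_aaa} after multiplying through and using $a^*a\le aa^*$). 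Iterating the inversion-reduction terminates at a monomial of the form $a^j(a^*)^{k-j}$, which is the base case.

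The main obstacle I anticipate is bookkeeping rather than conceptual: when I replace $a^*a$ by $aa^*-1$ inside $ss^*$, the leftover lower-degree pieces $uv$ are no longer of the clean form $a^m(a^m)^*$, so I cannot directly quote the inductive hypothesis on them. The fix is to prove the statement in the slightly stronger form ``$a^k(a^k)^* - ss^* \in \Sigma^2$ \emph{and} $ss^* \preceq a^k(a^k)^*$ stays controlled under all the above manipulations,'' or equivalently to first establish the monotonicity lemma $a^{k}(a^{k})^{*}-a^{k-1}(a^{k-1})^{*}\in\Sigma^{2}$ and feed every shorter monomial through it. With that in hand, every cross term and every lower-degree remainder is dominated by $a^k(a^k)^*$ up to a sum of squares, and the induction closes. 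This then yields Proposition \ref{cognazo}, and, combined with the obvious fact that any monomial $s$ in $n$ variables factors into commuting single-mode blocks, Lemma \ref{posi_anti} in full.
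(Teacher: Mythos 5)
Your inductive step on the number of inversions has a genuine gap at the cross terms. Write $s=u\,a^*a\,v$, $g=u\,aa^*v$, so that $ss^*=gg^*-g(uv)^*-(uv)g^*+(uv)(uv)^*$ and hence $a^k(a^k)^*-ss^*=\bigl[a^k(a^k)^*-gg^*\bigr]+g(uv)^*+(uv)g^*-(uv)(uv)^*$. The bracket is SOS by your inversion hypothesis, but the remaining three terms are indefinite, and the identity $\mu pq^*+\mu^*qp^*+|\mu|\,(pp^*+qq^*)\in\Sigma^2$ with $p,q\in\{uv,\,uaa^*v\}$ can only trade the cross terms for $-\lambda\, gg^*-\lambda^{-1}(uv)(uv)^*$ plus an SOS (for some weight $\lambda>0$). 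Closing the induction this way would therefore require $a^k(a^k)^*-(1+\lambda)gg^*-(1+\lambda^{-1})(uv)(uv)^*\in\Sigma^2$, which is false for every $\lambda>0$: on coherent states, $\bra{\alpha}\pi(gg^*)\ket{\alpha}\geq|\bra{\alpha}\pi(g)\ket{\alpha}|^2=|\alpha|^{2k}+O(|\alpha|^{2k-1})$ while $\bra{\alpha}\pi\bigl(a^k(a^k)^*\bigr)\ket{\alpha}=|\alpha|^{2k}+O(|\alpha|^{2k-2})$, so the candidate is not even positive semidefinite. In other words, the entire top-degree budget of $a^k(a^k)^*$ is spent dominating $gg^*$; the degree-$(2k-2)$ cross terms and the negatively-signed $(uv)(uv)^*$ can only be absorbed by the \emph{lower-degree} slack of $a^k(a^k)^*-gg^*$, and your sketch gives no quantitative control of that slack. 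The monotonicity lemma $a^k(a^k)^*-a^{k-1}(a^{k-1})^*\in\Sigma^2$ (true, but comparing the wrong objects) and the length-$(k-2)$ induction hypothesis give upper bounds on $(uv)(uv)^*$, which are useless for a term appearing with a minus sign. The base case is also asserted rather than proved: Proposition \ref{posi_aaa} covers only $s=a^*a^k$ (a single creation operator, on the left), whereas your base family $s=a^{j}(a^*)^{k-j}$ has the creators on the right and needs its own argument.

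For comparison, the paper's proof avoids cross terms altogether by inducting on the length of $s$ and peeling off the leftmost letter: if $s=a\tilde{s}$, then $a^{k+1}(a^{k+1})^*-ss^*=a\bigl(a^k(a^k)^*-\tilde{s}\tilde{s}^*\bigr)a^*$, which is SOS because conjugation $x\mapsto axa^*$ maps $\Sigma^2$ into $\Sigma^2$; if $s=a^*\tilde{s}$, the difference splits as $\bigl[a^{k+1}(a^{k+1})^*-a^*a^k(a^k)^*a\bigr]+a^*\bigl(a^k(a^k)^*-\tilde{s}\tilde{s}^*\bigr)a$, where the first bracket is exactly Proposition \ref{posi_aaa} and the second is SOS by induction. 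If you wish to salvage the inversion-counting route, you must show that the lower-order part of $a^k(a^k)^*-gg^*$ dominates the leftover terms, which is essentially the content of the proposition itself; as written, the argument does not go through.
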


\begin{proof}
We will prove the proposition by induction. Suppose, thus, that the proposition holds for all monomials of length smaller or equal than $k$, and let $s$ be an arbitrary monomial with $|s|=k+1$. There are two possibilities:

\begin{enumerate}
\item
$s=a\tilde{s}$, with $|\tilde{s}|=k$. Then we have that

\be
a^{k+1}(a^{k+1})^*-ss^*=a(a^{k}(a^{k})^*-\tilde{s}\tilde{s}^*)a^*=\sum_i af_if^*_ia^*\in \Sigma^2.
\ee

\item
$s=a^*\tilde{s}$, with $|\tilde{s}|=k$. Then we have that

\be
a^{k+1}(a^{k+1})^*-ss^*=\{a^{k+1}(a^{k+1})^*-a^*a^k(a^k)^*a\}+\{a^*(a^k(a^k)^* -\tilde{s}\tilde{s}^*)a\}.
\ee

\noindent The first term between brackets is a SOS by Proposition \ref{posi_aaa}; the second term is a SOS due to the induction hypothesis.

\end{enumerate}

\noindent To complete the induction we also have to show that the proposition also holds for $k=1$. But this is trivial, since, in that case, $aa^*-ss^*$ equals $0$ ($1$), for $s=a$ ($s=a^*$).

\end{proof}

\noindent\textbf{Lemma \ref{posi_anti}.} \emph{Let $s\in {\cal W}_n$ be a monomial. Then,}
\be
\ddagger ss^*\ddagger-ss^*\in \Sigma^2.
\ee

\begin{proof} 
Proposition \ref{cognazo} already shows that the lemma holds for $n=1$. Now, suppose that the lemma holds for $n$, and let $s=tu\in {\cal W}_{n+1}$, with $t$ ($u$) being a word with the letters $a_1,...,a_{n},a^*_1,...,a^*_n$ ($a_{n+1},a_{n+1}^*$). Let $\bar{t}\in\N^n$ be such that $\ddagger tt^*\ddagger=a^{\bar{t}}(a^{\bar{t}})^*$. Then,

\be
\ddagger ss^*\ddagger-ss^*=a^{\bar{t}} a^{|u|}_{n+1}(a^{|u|}_{n+1})^*(a^{\bar{t}})^*-a^{\bar{t}} uu^*(a^{\bar{t}})^*+u\ddagger tt^*\ddagger u^*-utt^*u^*.
\ee

\noindent The first two terms on the right hand side admit a SOS decomposition due to Proposition \ref{cognazo}. The two remaining terms belong to $\Sigma^2$ because of the induction hypothesis.
\end{proof}

\section{States in the Schr\"{o}dinger representation}
\label{states_sch}
In this appendix, we demonstrate the following lemma.

\textbf{Lemma \ref{represent}}. \emph{Let $L$ be a linear functional in ${\cal W}_n$. If $L(h^* h)\geq
0$ for any $h\in {\cal W}_n$ and there exist $c,d>0,k\in \N$ such that for any monomial $s\in \W_n$, the relation} 
\begin{equation}
|L(s)|\leq dc^{|s|}\Gamma(\frac{|s|+k+1}{2})
\label{condition}
\end{equation}

\noindent \emph{holds, then there exists a non normalized quantum state $\rho$ (a non-negative trace class operator) such that}

\begin{equation}
L(h)=\mbox{\textup{tr}}(\rho \pi(h)),
\end{equation}

\noindent \emph{for any polynomial $h\in \W_n$.}

\begin{proof}
Suppose that, indeed, such a functional exists and define its characteristic
function $\chi(\bar{\xi})$ as

\begin{equation}
\chi(\bar{\xi})=\sum_{l=0}^\infty \frac{L((i\bar{\xi}\cdot\sigma \bar{R})^l)}{l!}=:L(e^{i\bar{\xi}\sigma \bar{R}}),
\end{equation}

\noindent where $\bar{R}\in\W_n^{2n}$ is the vector of polynomials 

\be
\bar{R}=\left(\frac{a_1+a_1^*}{\sqrt{2}},\frac{a_1-a_1^*}{i\sqrt{2}},...,\frac{a_n+a_n^*}{\sqrt{2}},\frac{a_n-a_n^*}{i\sqrt{2}}\right),
\ee

\noindent and $\sigma$ denotes the symplectic form, i.e., $\sigma=\oplus_{l=1}^n\left(\begin{array}{cc}0&1\\-1&0\end{array}\right)$.

Using (\ref{condition}), we have that

\begin{equation}
|\chi(\bar{\xi})|\leq d\sum_{l=0}^\infty \frac{(2n\hat{\xi}c)^l\Gamma(\frac{l+k+1}{2})}{l!},
\label{cosita}
\end{equation}

\noindent with $\hat{\xi}=\frac{1}{\sqrt{2}}\max_k \{|\xi_{2k+1}+i\xi_{2k+2}|\}$. That the last
series converges for any value of $\hat{\xi}$ follows from the relation

\begin{equation}
\int_{-\infty}^\infty |x|^ke^{-x^2}dx=\Gamma(\frac{k+1}{2}).
\end{equation}

\noindent This allows us to write

\begin{equation}
\sum_{l=0}^\infty \frac{a^l\Gamma(\frac{l+k+1}{2})}{l!}=\int_{-\infty}^\infty e^{a|x|}|x|^ke^{-x^2}dx,
\end{equation}

\noindent and the last integral converges for all $a\in \R$.

Now, define the operator

\begin{equation}
\rho\equiv \frac{1}{(2\pi)^n}\int d\bar{\xi} \chi(\bar{\xi}) W_{-\bar{\xi}},
\end{equation}

\noindent with $W_{\bar{\xi}}=e^{i\bar{\xi}\sigma \bar{R}}$ being the so called \emph{Weyl operator} \cite{petz}. 

\noindent We will next prove that, for any polynomial $h$, $\tr(\rho \pi(h))=L(h)$.

Because $\tr(W_{\bar{\xi}}W_{\bar{\eta}})=(2\pi)^n\delta(\bar{\xi}+\bar{\eta})$, it follows
that $\tr(\rho W_{\bar{\xi}})=\chi(\bar{\xi})=L(W_{\bar{\xi}})$. Moreover,
from the \emph{Weyl relations}

\begin{equation}
W_{\bar{\xi}} W_{\bar{\eta}}=e^{-i\bar{\xi}\sigma\bar{\eta}/2}W_{\bar{\xi}+\bar{\eta}},
\label{weylrel}
\end{equation}

\noindent it is immediate that $\tr(\rho W_{\bar{\xi}}W_{\bar{\eta}})=L(e^{-i\bar{\xi}\sigma\bar{\eta}/2}W_{\bar{\xi}+\bar{\eta}})\equiv
f(\bar{\xi},\bar{\eta})$.
We will now show that 

\begin{equation}
f(\bar{\xi},\bar{\eta})=\lim_{r\to\infty}f^r(\bar{\xi},\bar{\eta}),
\label{limitito}
\end{equation}

\noindent where $f^r(\bar{\xi},\bar{\eta})\equiv\sum_{l,m=0}^r
L(\frac{(i\bar{\xi}\cdot\sigma \bar{R})^l}{l!}\frac{(i\bar{\eta}\cdot\sigma \bar{R})^m}{m!})$.

Note that $f$ is analytic, and that $f^r$ corresponds to a sort of truncated Taylor expansion of $f$ with respect to the variables $\bar{\xi},\bar{\eta}$, where only those monomials $\xi^{\bar{s}}\eta^{\bar{t}}$ with $\|\bar{s}\|_1,\|\bar{t}\|_1\leq r$ are present. Indeed, if we take any number of derivatives of $\xi_i,\eta_j$ (no more that $r$ of each) on both sides of (\ref{weylrel}) and then evaluate on the point $\bar{\xi}=\bar{\eta}=0$, we will arrive at two polynomials $p_1,p_2\in W_n$ on each side. Since this is a general relation between operators, both polynomials must be the same modulo the canonical commutation relations. It follows that 

\be
\frac{\partial^m\partial^{m'}}{\prod_{k=1}^m\partial \xi_{i_k}\prod_{l=1}^{m'}\partial\eta_{j_l}} f(\bar{\xi},\bar{\eta})\mid_{\bar{\xi}=\bar{\eta}=0}=L(p_1)=L(p_2)=\frac{\partial^m\partial^{m'}}{\prod_{k=1}^m\partial \xi_{i_k}\prod_{l=1}^{m'}\partial\eta_{j_l}}  f^r(\bar{\xi},\bar{\eta})\mid_{\bar{\xi}=\bar{\eta}=0},
\ee

\noindent for $m,m'\leq r$. Since $f(\bar{\xi},\bar{\eta})$ is analytic in $\bar{\xi},\bar{\eta}$, the sum of the absolute values of the coefficients of its Taylor expansion must converge. This implies that $\lim_{r\to \infty}f(\bar{\xi},\bar{\eta})-f^r(\bar{\xi},\bar{\eta})\to 0$ for fixed $\bar{\xi},\bar{\eta}$, and so we get equation (\ref{limitito}). 

Analogously, one can prove that

\be
\tr(\rho\prod_{j=1}^kW_{\bar{\xi}_j})=\lim_{N\to\infty}L\left(\prod_{j=1}^k\sum_{l=0}^N\frac{(i\bar{\xi}_j\cdot\sigma \bar{R})^l}{l!}\right),
\label{genara}
\ee

\noindent for any $k$.

Now, let $p=\prod_{k=1}^mp_k\in W_n$, with $p_k\in\{\frac{a_i+a_i^*}{2},\frac{a_i-a_i^*}{2i}\}_{i=1}^n$. Then, 

\be
\frac{L(p)}{i^m}=\frac{\partial^m}{\partial \epsilon_1,...,\epsilon_k} L(\prod_{k=1}^m\sum_{l=0}^\infty\frac{(i\epsilon_ks_k)^l}{l!})\mid_{\bar{\epsilon}=0}=\frac{\partial^m}{\partial \epsilon_1,...,\epsilon_k}\tr(\rho\prod_{k=1}^me^{i\epsilon_kp_k}))=\frac{\tr(\rho\pi(p))}{i^m}.
\ee

\noindent By taking linear combinations of the former expectation values, we thus have that $\tr(\rho \pi(h))=L(h)$ for any $h\in \W_n$.

It only rests to show that $\rho$ is a quantum state, i.e., it is trace-class and positive semidefinite.

\noindent From the quantum Bochner-Kinchin theorem \cite{Holevo}, we know that $\chi(\bar{\xi})$
is the characteristic function of a non normalized quantum state if and only if 

i) $\chi(\bar{\xi})$ is continuous at the origin.

ii) for any $r\in \N$, $\bar{\xi}_1,\bar{\xi}_2, ..., \bar{\xi}_r\in \R^{2n}$, and $c_1,c_2,...,c_r\in
\C$ the relation

\begin{equation}
\sum_{k,l=0}^rc_kc_l^*\chi(\bar{\xi}_k-\bar{\xi}_l)e^{i\bar{\xi}_k\sigma\bar{\xi}_l/2}\geq
0
\end{equation}

\noindent holds.

From (\ref{cosita}) we know that $\chi(\bar{\xi})$ is not only continuous
everywhere but even analytic. On the other hand, from (\ref{limitito}) we have that

\begin{equation}
\sum_{k,l=0}^rc_kc_l^*\chi(\bar{\xi}_k-\bar{\xi}_l)e^{i\bar{\xi}_k\sigma\bar{\xi}_l/2}=\lim_{m\to\infty}L(h_mh_m^*),
\end{equation}

\noindent where $h_m=\sum_kc_k\sum_{l=0}^m
\frac{(i\bar{\xi_k}\cdot\sigma \bar{R})^l}{l!}$. Since $h_m\in\W_n$, $L(h_mh_m^*)\geq 0$, and so the above limit is non-negative.

\end{proof}

\end{appendix}


\begin{thebibliography}{99}
%\bibitem{alex}
%A. Retzker, R. Thompsson, D. Segal and M.B. Plenio, Phys. Rev. %Lett. 101, 260504 (2008).
\bibitem{szabo}
A. Szabo and N. S. Ostlund, \emph{Modern Quantum Chemistry: introduction to advanced electronic structure theory}, Dover publications Inc., Mineola, New York (1996).
\bibitem{nakata}
M. Nakata, H. Nakatsuji, M. Ehara, M. Fukuda, K. Nakata, and K. Fujisawa, J. Chem. Phys., {\bf114}, 8282 (2001).
\bibitem{mazziotti}
D. A. Mazziotti, Phys. Rev. Lett. 93, 213001 (2004).
\bibitem{nakata2}
M. Nakata, B. J. Braams, K. Fujisawa, M. Fukuda, J. K. Percus, M. Yamashita, and Z. Zhao, J. Chem. Phys. {\bf 128}, 164113 (2008).
\bibitem{coleman}
A. J. Coleman, Rev. Mod. Phys. {\bf35}, 668 (1963).
\bibitem{garrod}
C. Garrod and J. K. Percus, J. Math. Phys. {\bf5}, 1756 (1964).
\bibitem{erdahl}
R. M. Erdahl, Intl. J. Quantum Chem. {\bf 13}, 697 (1978).
\bibitem{POI}
S. Pironio, M. Navascu\'es and A. Ac\'in, SIAM J. Optim. 20, 5, 2157-2180 (2010).
\bibitem{chapter}
M. Navascu\'es, S. Pironio and A. Ac\'in, \emph{Noncommutative Polynomial Optimization}, Handbook on Semidefinite, Cone and Polynomial Optimization, M.F. Anjos and J. Lasserre (eds); Springer, 2011.
\bibitem{moment}
J. B. Lasserre, SIAM J. Optim., 11, 796-817 (2001).
\bibitem{parrilo}
P.A. Parrilo, Math. Program. Ser. B 96, 293-320 (2003).
\bibitem{quantum}
M. Navascu\'es, S. Pironio and A. Ac\'in, Phys. Rev. Lett. 98, 010401 (2007); M. Navascu\'es, S. Pironio and A. Ac\'in, New J. Phys. 10, 073013 (2008); A. C. Doherty, Y. C. Liang, B. Toner and S. Wehner, Proceedings of IEEE Conference on Computational Complexity 2008,
College Park, Maryland, USA, IEEE Computer Society (2009).
\bibitem{hubener}
R. H\"{u}bener and T. Barthel, arXiv:1106.4966v1.
\bibitem{tillmann}
T. Baumgratz and M. B. Plenio, New J. Phys. {\bf 14}, 023027 (2012).
\bibitem{maz_bos}
J. R. Hammond and D. A. Mazziotti, Phys. Rev. A 71, 062503 (2005).
\bibitem{henrion}
D. Henrion and J. B. Lasserre, \emph{Detecting global optimality and extracting solutions in GloptiPoly}, Positive polynomials
in control, D. Henrion, and A. Garulli, eds., Lecture Notes on Control and Information Sciences, Vol. 312,
Springer Verlag, Berlin (2005).
\bibitem{waki}
H. Waki, M. Nakata and M. Muramatsu, \emph{Strange Behaviors of Interior-point Methods for Solving Semidefinite Programming Problems in Polynomial Optimization}, To appear in Computational Optimization and Applications.
\bibitem{schmudgen}
K. Schm\"{u}dgen, Math. Annalen, 331, pp. 779-794 (2005).
\bibitem{positivsten} {J.~W. Helton and S.~A. McCullough}, {\em A Positivstellensatz for
  non-commutative polynomials}, Trans. Amer. Math. Soc., 356 (2004), pp.~3721--3737.
\bibitem{sdp}
L. Vandenberghe and S. Boyd, \emph{Semidefinite programming}, S. SIAM Review {\bf 38}, 49 (1996).
\bibitem{SDPT3}
K.C. Toh, M.J. Todd and R.H. Tutuncu, \emph{SDPT3 --- a Matlab software package for semidefinite programming, Optimization Methods and Software}, 11, 545 (1999); R.H Tutuncu, K.C. Toh and M.J. Todd, \emph{Solving semidefinite-quadratic-linear programs using SDPT3}, Mathematical Programming Ser. B, 95. 189 (2003).
\bibitem{yalmip}
J. L\"ofberg, \emph{YALMIP : A Toolbox for Modeling and Optimization in MATLAB}, http://control.ee.ethz.ch/\char126 joloef/yalmip.php.
\bibitem{approx}
J. B. Lasserre, SIAM Journal on Optimization, 16, 751 (2006).
\bibitem{Lasserre}
J. B. Lasserre, T. Netzer, Math. Z., 256, 99 (2007).
\bibitem{reedsimon}
M. Reed and B. Simon, Functional Analysis, Academic Press (1980).
\bibitem{SDPA}
K. Fujisawa, M. Fukuda, K. Kobayashi, M. Kojima, K. Nakata, M. Nakata and M, Yamashita, \emph{SDPA (SemiDefinite
Programming Algorithm) and SDPA-GMP User's Manual –Version 7.1.0}, Research Reports on Mathematical and
Computing Sciences, B-448 (2008); M. Nakata, \emph{A numerical evaluation of highly accurate multiple-precision arithmetic version of semidefinite programming solver:SDPA-GMP, -QD and -DD}, the proceedings of 2010 IEEE Multi-Conference on Systems and Control, 29-34, 2010.
\bibitem{cimpric}
J. Cimpri\v{c}, J. math. anal. appl., 369, no. 2, str. pp. 443-452 (2010).
\bibitem{petz}
D. Petz, \emph{An Invitation to the Algebra of Canonical Commutation Relations}, Leuven University Press, Leuven (1990).
\bibitem{Holevo}
A. S. Holevo, \emph{Probabilistic and statistical aspects of quantum theory},
North-Holland Publishing Company (1982).
\end{thebibliography}
\end{document}